\newcolumntype{2}{D{.}{}{2.0}}
\newtheorem{theorem}{Theorem}
\newtheorem{lemma}{Lemma}
\theoremstyle{definition}
\newtheorem{remark}{Remark}
\newtheorem{model}{Model}
\newcommand{\Ex}{\mathbb{E}}
\DeclarePairedDelimiterX{\norm}[1]{\lVert}{\rVert}{#1}
\title{Bispectrum Unbiasing for Dilation-Invariant
Multi-reference Alignment}
\author{\IEEEauthorblockN{Liping Yin\IEEEauthorrefmark{1},
Anna Little\IEEEauthorrefmark{2}, and Matthew Hirn \IEEEauthorrefmark{3}}
\IEEEauthorblockA{CMSE Department and Mathematics Department, Michigan State University \IEEEauthorrefmark{1}\IEEEauthorrefmark{3}\\
Department of Mathematics and the Utah Center For
Data Science, University of Utah\IEEEauthorrefmark{2}\\
Email: \IEEEauthorrefmark{1}yinlipi1@msu.edu,
\IEEEauthorrefmark{2}little@math.utah.edu,
\IEEEauthorrefmark{3}mhirn@msu.edu}}
\begin{document}

\maketitle

\begin{abstract}
Motivated by modern data applications such as cryo-electron microscopy, the goal of classic multi-reference alignment (MRA) is to recover an unknown signal $f: \mathbb{R} \to \mathbb{R}$ from many observations that have been randomly translated and corrupted by additive noise. We consider a generalization of classic MRA where signals are also corrupted by a random scale change, i.e. dilation.  
We propose a novel data-driven unbiasing procedure which can recover an unbiased estimator of the bispectrum of the unknown signal, given knowledge of the dilation distribution. 
Lastly, we invert the recovered bispectrum to achieve full signal recovery, and validate our methodology on a set of synthetic signals.
\end{abstract}

\section{Introduction to MRA}
In classic multi-reference alignment (MRA), one seeks to recover an unknown signal $f: \mathbb{R} \to \mathbb{R}$ from many observations that have been randomly translated and corrupted by additive noise. 
The MRA problem is a simplified version of problems in cryo-electron microscopy (cryo-EM), and is similar to problems in other fields such as structural biology \cite{diamond1992multiple, park2014assembly, park2011stochastic,  sadler1992shift, scheres2005maximum, theobald2012optimal}, image registration \cite{brown1992survey, foroosh2002extension}, and image processing \cite{zwart2003fast}. 
A formal description of the assumptions is given in Model \ref{model1}.

\begin{model}[Classic MRA] \label{model1}
Suppose we have $M$ independent observations of a real-valued function $f \in L^{2}(\mathbb{R})$ defined by
$$y_j(x) = f(x-t_j) + \varepsilon_j(x), \quad 1 \leq j \leq M,$$
where
\begin{enumerate}[(i)]
\item $\text{supp}(y_j) \subset [-\tfrac{1}{2}, \tfrac{1}{2}]$ for $1 \leq j \leq M$.
\item $\{t_j\}_{j=1}^M$ are independent samples of a random variable $t \in \mathbb{R}$.
\item $\{\varepsilon_j(x)\}_{j=1}^M$ are independent white noise processes on $[-\tfrac{1}{2}, \tfrac{1}{2}]$ with variance $\sigma^2$.
\end{enumerate}
\end{model}

Methods for solving Model \ref{model1} can be grouped into two categories. The first approach is synchronization methods \cite{ bandeira2020non, bandeira2017estimation, bandeira2016low,  bandeira2014multireference,  boumal2016nonconvex,  chen2018projected,  chen2014near, perry2018message, singer2011angular, zhong2018near}, which try to recover each translation factor $\{t_j\}_{j=1}^M$, align the signals using the recovered translation factor, and average the aligned signals to get a smoother estimate for the ground truth signal. However, synchronization methods can be problematic when the signal-to-noise ratio (SNR) is small.
Although synchronization is tenable at small noise levels, 
at high noise levels, the peaks are not recognizable and synchronization fails.
The second approach involves estimating the signal directly, without estimation of the translation factors, using ideas such as the method of moments \cite{hansen1982large, kam1980reconstruction, sharon2019method}. The method of invariants \cite{bandeira2020non, bendory2017bispectrum, collis1998higher, hirn2023power, hirn2021wavelet} is an important subclass of such methods in which one first estimates translation invariant features of the hidden signal (for example its power spectrum and bispectrum), and then an inversion algorithm is applied to recover the hidden signal from the estimated features. Expectation maximization (EM) algorithms \cite{abbe2018multireference, dempster1977maximum} have also shown success for signal recovery, although potential disadvantages include high computation time and convergence to local minima.

Although analysis of Model \ref{model1} has led to important insights \cite{perry2017sample}, it is a highly simplified model which does not include important sources of physical variation in many modern applications. 
For example, in three dimensional applications such as cryo-EM, molecules are randomly rotated and one only has access to 2D projections of the molecule. In addition, objects may undergo changes in size/scale, as well as non-static regions where macro-molecular structures ``flop" around \cite{lim2002modular,ekman2005multi,levitt2009nature,forneris2012modular}. One can think of these deformations as a diffeomorphism acting on the molecule before retrieving the observations. 
In other words, if $g$ is a function extracting the observations and $x$ is a molecule, we retrieve $g(\xi(x))$, where $\xi \in C^2(\mathbb{R}^n)$ has a bijective, continuous hessian. 
However solving Model \ref{model1} in the presence of random diffeomorphisms is highly challenging, and generally must involve either assuming that the diffeomorphism is ``small", or that it has an underlying structure which can be leveraged. In this article, we pursue the latter approach, and specifically focus on the case where the diffeomorphism is a linear function, i.e. the hidden signal is also randomly dilated. As further motivation, we note the analysis of dilations is important in imaging \cite{chandran1992position, capodiferro1987correlation, tsatsanis1990translation, hotta2001scale, martinec2007robust, robinson2007optimal} and audio \cite{omer2013estimation, omer2017time, meynard2018spectral, omer2013estimation, omer2017time, meynard2018spectral} applications.
Incorporating dilations into Model \ref{model1} leads to the formulation of Model \ref{model2} below; see Figure \ref{fig: dilation mra} for an illustration.
\begin{model} \label{model2}
Suppose we have $M$ independent observations of a real-valued function $f \in L^2(\mathbb{R})$ defined by
\begin{align*}
y_j(x) &:= f((1 - \tau_j)^{-1}(x-t_j)) + \varepsilon_j(x) \\
&:= f_j(x) + \varepsilon_j(x), \quad 1 \leq j \leq M.
\end{align*}
Furthermore, assume that
\begin{enumerate}[(i)]
\item $\text{supp}(y_j) \subset [-\tfrac{1}{2}, \tfrac{1}{2}]$ for $1 \leq j \leq M$.
\item $\{t_j\}_{j=1}^M$ are independent samples of a random variable $t \in \mathbb{R}$.
\item $\{\tau_j\}_{j=1}^M$ are independent samples of a uniformly distributed random variable $\tau \in \mathbb{R}$ satisfying $\mathbb{E}[\tau] = 0$ and $\text{Var}(\tau) = \eta^2 \leq \tfrac{1}{12}$.
\item $\{\varepsilon_j(x)\}_{j=1}^M$ are independent white noise processes on $[-\tfrac{1}{2}, \tfrac{1}{2}]$ with variance $\sigma^2$.
\end{enumerate}
\end{model} 
\begin{figure}
    \centering
    \includegraphics[scale = 0.3]{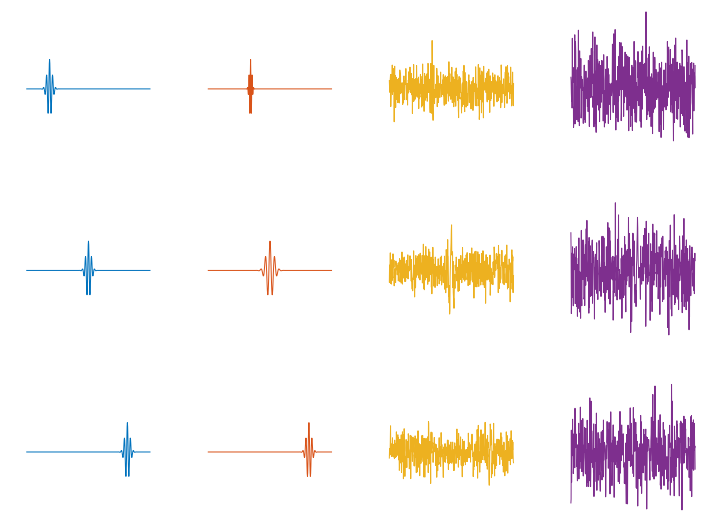}
    \caption{\textbf{Left Column:} three ground truth signals that have been translated without any additive noise. \textbf{Middle Left Column:} Dilating each of the signals. \textbf{Middle Right Column:} Adding Gaussian noise with $\sigma^2 = 0.5$ to each of the signals. \textbf{Right Column:} Adding Gaussian noise with $\sigma^2 = 2$ to each of the signals. Source: \cite{hirn2023power}.} 
    \label{fig: dilation mra}
\end{figure} 
Although Model \ref{model2} corresponds to the simplest possible class of diffeomorphisms other than translations (i.e. linear functions), solving 
Model \ref{model2} is significantly more difficult than solving Model 1, 
as even small dilations can create large perturbations in the frequency domain, especially for high frequency signals \cite{mallat:scattering2012}. 
Our solution to this problem will be a method of invariants which utilizes translation invariant Fourier features such as the power spectrum and bispectrum. 
However, since the Fourier Transform modulus is unstable to small dilations, we must unbias the Fourier invariants to remove the impact of dilations before inversion. 

Model \ref{model2} was also considered in \cite{hirn2021wavelet, hirn2023power}. However both papers only propose estimators for the \textit{power spectrum} of the hidden signal and thus do not fully solve Model \ref{model2}. \cite{hirn2021wavelet} use wavelet-based invariants and a differential unbiasing procedure, 
but the method fails to produce a fully unbiased estimator of the power spectrum. \cite{hirn2023power} develop a novel unbiasing technique which can be applied to the noisy power spectra to obtain a fully unbiased power spectrum estimator. In this article we develop a similar unbiasing procedure that yields an unbiased estimator of the \textit{bispectrum} of the hidden signal. Since the bispectrum is invertible under fairly general conditions, we thus provide the first method capable of achieving full signal recovery for Model \ref{model2}.

For convenience we also define the following 
noiseless model, Model \ref{model3}, given by: 
\begin{model} \label{model3}
Suppose we have $M$ independent observations of a function real-valued $f \in L^2(\mathbb{R})$ defined by
$$f_j(x) = f((1 - \tau_j)^{-1}(x-t_j)), \quad 1 \leq j \leq M$$
Furthermore, assume that
\begin{enumerate}[(i)]
\item $\text{supp}(f_j) \subset [-\tfrac{1}{2}, \tfrac{1}{2}]$ for $1 \leq j \leq M$.
\item $\{t_j\}_{j=1}^M$ are independent samples of a random variable $t \in \mathbb{R}$.
\item $\{\tau_j\}_{j=1}^M$ are independent samples of a uniformly distributed random variable $\tau \in \mathbb{R}$ with $\mathbb{E}[\tau] = 0$ and $\text{Var}(\tau) = \eta^2 \leq \tfrac{1}{12}$.
\end{enumerate}
\end{model}
We note that solving Model \ref{model3} is in fact trivial, since in the absence of additive noise one can first estimate $\|f\|_2$ and then recover $f$ (up to translation) by an appropriate dilation of any observation. However 
to solve Model \ref{model2}, it is convenient to first solve Model \ref{model3} via Fourier invariants and then generalize the approach to the noisy setting.

The remainder of the article is organized as follows. Section \ref{sec:notation} establishes necessary notation and definitions. Section \ref{sec:DilationMRA} gives bispectrum recovery results for Model \ref{model3}. Section \ref{sec:NoisyDilationMRA} gives bispectrum recovery results for Model \ref{model2}. Section \ref{sec:optimization} discusses how the proposed method can be implemented by solving a convex optimization problem. Section \ref{sec:numerics_BSrecovery} reports numerical experiments investigating the accuracy of bispectrum recovery, and Section \ref{sec:numerics_HSrecovery} reports numerical experiments investigating the accuracy of full hidden signal recovery. Section \ref{sec:conclusion} concludes the article.


\section{Notation and preliminaries}
\label{sec:notation}
We let $L^{q}(\mathbb{R})$ represent the set of functions $f$ such that $\|f\|_q^q = \int_{\mathbb{R}} |f|^q \, dx < \infty$. The Fourier Transform of $f \in L^1(\mathbb{R})$ is 
\begin{equation}
    \hat{f}(\omega) = \int_{\mathbb{R}} f(t)e^{-iwt} \,dt,
\end{equation} the power spectrum is
\begin{equation}
(Pf)(\omega) = |\hat{f}(\omega)|^2,
\end{equation} and the bispectrum is
\begin{equation}
Bf(\omega_1, \omega_2) = \hat{f}(\omega_1)\hat{f}^{*}(\omega_2)\hat{f}(\omega_2-\omega_1), ,
\end{equation} 
where $h^*$ denotes the complex conjugate of $h$.
Let $g:= Bf$ where $f$ is the hidden signal. For the second and third models, let
$$g_\eta(\omega_1, \omega_2) := \mathbb{E}_\tau[(Bf_j)(\omega_1, \omega_2)].$$
We will also define the following constants and operations used in the rest of the paper:
\begin{align}
	\label{equ:PSConstants}
	C_0 &:= \frac{(1-\sqrt{3}\eta)}{(1+\sqrt{3}\eta)}\, , \ C_1 := 2\sqrt{3}\eta \, , \ C_2 := \frac{1}{1+\sqrt{3}\eta} \,.
\end{align}
Additionally, define the dilation operator $$L_Cg(\omega_1, \omega_2) := C^4g(C\omega_1, C \omega_2).$$ Note that in polar coordinates $(r, \theta)$, we have $L_Cg(r, \theta) = C^4g(Cr, \theta).$ 
To precisely quantify higher order error terms, we define the following functions:
$$ \overline{(Bf)}^{k}(r, \theta) := r^k\max_{\alpha \in[r/2,2r]}|\partial_{\alpha}^{k}(Bf)(\alpha, \theta)|$$
for integer $k$.
Throughout the paper we write $f \lesssim g$ when $f\leq C g$ for an absolute constant $C$. We write $f \lesssim_\alpha g$ or $f=O_\alpha(g)$ when $f \leq C_\alpha g$ for a constant $C$ depending on $\alpha$.

\section{Bispectrum Recovery for Model 3}
\label{sec:DilationMRA}
We first illustrate how to define an unbiased estimator for the bispectrum under Model \ref{model3}, i.e. in the absence of additive noise.
As in \cite{hirn2023power}, 
it is insightful to first consider the case where we have an infinite number of samples and thus perfect access to $g_\eta$. In this case the bisepctrum of the hidden signal can be perfectly recovered, assuming $\eta$ is known, as described in the following theorem.

\begin{theorem} \label{prop: infinite sample estimator}
Assume that $Bf \in C^1(\mathbb{R}^2)$. Then $Bf$ can be recovered from $g_\eta$, namely:
\begin{small}
$$Bf(r, \theta) = (I-L_{C_0})^{-1}C_1 L_{C_2}\left(4 g_\eta(r , \theta) + r \frac{\partial g_\eta}{\partial r} (r , \theta)\right).$$
\end{small}
\end{theorem}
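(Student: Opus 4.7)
The plan is to derive a forward integral representation of $g_\eta$ in terms of $Bf$ via the bispectrum's dilation rule, convert it into a pointwise operator identity by differentiating in $r$, and then invert the resulting operator equation via a Neumann series.

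\textbf{Step 1 (Forward formula).} First I would exploit the translation invariance of the bispectrum—the three phase factors from the shift $t_j$ cancel in the triple product—so that $Bf_j$ depends only on the dilation factor $1-\tau_j$. Applying the scaling rule $\widehat{f(\cdot/a)}(\omega) = a\hat{f}(a\omega)$, which is valid since $\sqrt{3}\eta \leq 1/2$ ensures $1-\tau_j > 0$ almost surely, yields $(Bf_j)(\omega_1,\omega_2) = (1-\tau_j)^3 (Bf)((1-\tau_j)\omega_1,(1-\tau_j)\omega_2)$. Taking the expectation over $\tau$ uniform on $[-\sqrt{3}\eta,\sqrt{3}\eta]$ and substituting $s = 1-\tau$ gives, in polar coordinates,
\[
g_\eta(r,\theta) = \frac{1}{2\sqrt{3}\eta}\int_{1-\sqrt{3}\eta}^{1+\sqrt{3}\eta} s^3\,(Bf)(sr,\theta)\,ds.
\]

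\textbf{Step 2 (Differential identity).} Next I would multiply by $r^4$ and change variables $u = sr$ to obtain
\[
2\sqrt{3}\eta\,r^4 g_\eta(r,\theta) = \int_{(1-\sqrt{3}\eta)r}^{(1+\sqrt{3}\eta)r} u^3 (Bf)(u,\theta)\,du.
\]
Differentiating in $r$ (which is legitimate because $Bf \in C^1$), the left-hand side becomes $C_1 r^3\bigl(4g_\eta + r\,\partial_r g_\eta\bigr)$, while the Leibniz rule applied to the right-hand side produces $r^3\bigl((1+\sqrt{3}\eta)^4 (Bf)((1+\sqrt{3}\eta)r,\theta) - (1-\sqrt{3}\eta)^4(Bf)((1-\sqrt{3}\eta)r,\theta)\bigr)$. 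Dividing by $r^3$ and recognizing the right-hand side as $(L_{1+\sqrt{3}\eta} - L_{1-\sqrt{3}\eta})Bf$, together with the composition identity $L_C L_D = L_{CD}$ (so that $L_{1-\sqrt{3}\eta} = L_{1+\sqrt{3}\eta}L_{C_0}$), yields the operator equation
\[
C_1\bigl(4g_\eta + r\,\partial_r g_\eta\bigr) = L_{1+\sqrt{3}\eta}(I - L_{C_0})Bf.
\]

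\textbf{Step 3 (Inversion).} Finally I would apply $L_{1+\sqrt{3}\eta}^{-1} = L_{C_2}$ to isolate $(I - L_{C_0})Bf = C_1 L_{C_2}(4g_\eta + r\,\partial_r g_\eta)$, then invert $I - L_{C_0}$ via the Neumann series $\sum_{k\geq 0} L_{C_0}^k$. Justifying this last step is the main obstacle: the $k$-th term evaluates to $C_0^{4k}(Bf)(C_0^k r,\theta)$, and since $C_0 \in (0,1)$ and $Bf$ is bounded near the origin (by continuity and the fact that $Bf(0,\theta) = \hat{f}(0)^3$), this decays geometrically, so the series converges uniformly on compact sets. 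Uniqueness of the inverse in the $C^1$ class follows because any $C^1$ solution $\phi$ of $(I-L_{C_0})\phi = h$ satisfies $\phi = \sum_{k=0}^N L_{C_0}^k h + L_{C_0}^{N+1}\phi$, and the remainder vanishes as $N \to \infty$ by the same geometric decay. Substituting back yields the claimed closed-form expression for $Bf(r,\theta)$.
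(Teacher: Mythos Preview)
Your proposal is correct and follows essentially the same route as the paper's proof: compute $g_\eta$ as an integral of $(Bf)$ over the dilation range, multiply by $r^4$ and differentiate in $r$ to convert the integral identity into a pointwise operator equation, then apply $L_{C_2}$ and invert $I-L_{C_0}$. The only differences are cosmetic (you use the Leibniz rule directly where the paper passes through an antiderivative $H$) and that you add detail the paper omits---explicitly invoking translation invariance of the bispectrum and justifying the Neumann-series inversion via the geometric decay of $C_0^{4k}(Bf)(C_0^k r,\theta)$.
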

\begin{proof}
Recall that the Bispectrum is given by 
$$Bf(\omega_1, \omega_2) = \hat{f}(\omega_1)\hat{f^{*}}(\omega_2)\hat{f}(\omega_2-\omega_1).$$
The Fourier Transform of each $f_j$ is $e^{-i \omega t_j}(1-\tau_j)\hat{f}((1-\tau_j)\omega)$, so
$$(Bf_j)(\omega_1, \omega_2) = (1-\tau_j)^3 (Bf)((1-\tau_j) \omega_1, (1-\tau_j) \omega_2).$$

Since $\tau$ has uniform distribution with variance $\eta^2$, the pdf of $\tau$ has form $p_\tau = \frac{1}{2 \sqrt{3} \eta} \chi_{[-\sqrt{3}\eta, \sqrt{3}\eta]}$. Thus:
\begin{align*}
& g_\eta(\omega_1, \omega_2) = \mathbb{E}_\tau[Bf_j(\omega_1, \omega_2)]\\
&\quad =\mathbb{E}_\tau[(1-\tau)^3g((1-\tau)w_1,(1-\tau)w_2)] \\
&\quad = \int (1-\tau)^3 g((1-\tau)w_1,(1-\tau)w_2)  p_\tau(\tau) \, d\tau \, .
\end{align*}
Now we convert to polar coordinates $(r, \theta)$ and let $\tilde{\tau} = (1 - \tau)r$:
\begin{align*}
g_\eta(r, \theta) &= \frac{1}{2 \sqrt{3}\eta}\int_{-\sqrt{3}\eta}^{\sqrt{3}\eta}(1-\tau)^3 g((1-\tau)r, \theta) d\tau \\
&= \frac{1}{2\sqrt{3}\eta r^4}\int_{(1-\sqrt{3}\eta)r}^{(1+\sqrt{3}\eta)r}  \tilde{\tau}^3g(\tilde{\tau},\theta) \, d\tilde{\tau}.
\end{align*}

Let $H$ be the antiderivative in the variable $w$ for the function $h(w, \theta) = w^3g(w, \theta)$. In other words,
$$\frac{\partial H}{\partial w}(w, \theta) = h(w, \theta) =  w^3g(w, \theta).$$ By Fundamental Theorem of Calculus,
$$2 \sqrt{3} \eta r^4 g_\eta(r , \theta) = H( (1 + 3 \sqrt{\eta})r, \theta) - H( (1 - 3 \sqrt{\eta})r, \theta).$$
Now take derivative with respect to $r$ and divide both sides by $r^3$ to get
\begin{align*}
&2 \sqrt{3} \eta \left(4 g_\eta(r , \theta) + r \frac{\partial g_\eta}{\partial r} (r , \theta)\right) \\
&= (1 + 3 \sqrt{\eta})^4 g((1 + 3 \sqrt{\eta})r, \theta) \\
&- (1 - 3 \sqrt{\eta})^4 g((1 - 3 \sqrt{\eta})r, \theta).
\end{align*}
We now apply the dilation operation $L_{C_2}$ to both sides and rewrite the right side in terms of $I$ and $L_{C_0}$ to get 
$$C_1 L_{C_2}\left(4 g_\eta(r , \theta) + r \frac{\partial g_\eta}{\partial r} (r , \theta)\right) = (I-L_{C_0})g(r, \theta).$$
Thus:
$$g(r,\theta) = (I-L_{C_0})^{-1}C_1 L_{C_2}\left(4 g_\eta(r , \theta) + r \frac{\partial g_\eta}{\partial r} (r , \theta)\right).$$
\end{proof} 

In the finite sample case, $g_\eta$ is not known, but under Model \ref{model3} it can be well approximated by:
$$\widetilde{g}_\eta(\omega_1, \omega_2) := \frac{1}{M} \sum_{j=1}^M (Bf_j)(\omega_1, \omega_2).$$
Motivated by Theorem \ref{prop: infinite sample estimator}, we thus define the following estimator: 
\begin{small}
\begin{align}
\label{equ:est_model3}
&(\widetilde{Bf})(r, \theta)  \\
&\quad := (I-L_{C_0})^{-1} C_1 L_{C_2}\left(4\tilde{g}_\eta(r,\theta) + r \frac{\partial\tilde{g}_\eta}{\partial r}(r, \theta)\right)\, . \notag
\end{align}
\end{small}
To show the estimator $\widetilde{Bf}$ has a small error, we will need the following lemma, which is a straight-forward generalization of Lemma 1 in \cite{hirn2023power}.
\begin{lemma} \label{lemma 1}
Assume $Bf \in C^1(\mathbb{R})$ and $\widetilde{Bf}$ is as defined in \eqref{equ:est_model3}. Then
\begin{align*}
\|Bf(r, \theta) - \widetilde{Bf}(r, \theta)\|_2^2 &\lesssim \|g_\eta(r, \theta) - \tilde{g}_\eta(r , \theta)\|_2^2 \\
&+ \left\| r{\partial_r g_\eta} (r , \theta) - r{\partial_r \tilde{g}_\eta} (r , \theta)\right\|_2^2.
\end{align*}
\end{lemma}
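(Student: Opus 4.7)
The plan is to subtract the two closed-form expressions and then bound each operator in the resulting composition. Starting from Theorem \ref{prop: infinite sample estimator} and the definition in \eqref{equ:est_model3}, the linearity of $L_{C_2}$, $(I-L_{C_0})^{-1}$, and the derivative $\partial_r$ gives
\begin{align*}
Bf(r,\theta) - \widetilde{Bf}(r,\theta) = (I-L_{C_0})^{-1} C_1 L_{C_2} \Big(& 4\bigl(g_\eta - \tilde g_\eta\bigr)(r,\theta) \\
& + r\,\partial_r\bigl(g_\eta - \tilde g_\eta\bigr)(r,\theta)\Big).
\end{align*}
From here the task reduces to proving that each of $C_1$, $L_{C_2}$, and $(I-L_{C_0})^{-1}$ is bounded on $L^2$ with a constant depending only on $\eta$, and then applying the triangle inequality inside the parentheses.

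For the dilation operator, a direct change of variables $u = C\omega$ in Cartesian coordinates shows $\|L_C h\|_2^2 = C^6 \|h\|_2^2$, i.e., $\|L_C\|_{L^2 \to L^2} = C^3$. Since $C_0 = (1-\sqrt{3}\eta)/(1+\sqrt{3}\eta) < 1$ whenever $\eta > 0$, we have $\|L_{C_0}\|_{L^2 \to L^2} = C_0^3 < 1$, and the Neumann series $(I-L_{C_0})^{-1} = \sum_{k\geq 0} L_{C_0}^k$ converges in operator norm with $\|(I-L_{C_0})^{-1}\| \leq (1-C_0^3)^{-1}$. Similarly, $\|L_{C_2}\| = C_2^3$ and $C_1 = 2\sqrt{3}\eta$ is a scalar. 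Composing the three bounds gives
\[
\|Bf - \widetilde{Bf}\|_2 \;\lesssim_\eta\; \bigl\|4(g_\eta - \tilde g_\eta) + r\,\partial_r(g_\eta - \tilde g_\eta)\bigr\|_2.
\]

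Finally I would apply the triangle inequality and then square both sides, using $(a+b)^2 \leq 2(a^2+b^2)$, to separate the two contributions:
\[
\|Bf - \widetilde{Bf}\|_2^2 \;\lesssim\; \|g_\eta - \tilde g_\eta\|_2^2 + \|r\,\partial_r(g_\eta - \tilde g_\eta)\|_2^2,
\]
which is the claimed bound. The implicit constant depends on $\eta$ through $C_0, C_1, C_2$, consistent with the ``$\lesssim$'' notation established in Section \ref{sec:notation}.

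The main obstacle is verifying that $(I-L_{C_0})^{-1}$ is indeed a bounded operator on $L^2$; everything else is a routine composition of norms and a triangle inequality. The key observation making this work is the strict inequality $C_0 < 1$, which is guaranteed by the assumption $\eta > 0$ (and more strongly $\eta^2 \leq 1/12$), so that the Neumann series converges and yields an $\eta$-dependent but finite constant. Since this is stated as a straightforward generalization of Lemma 1 in \cite{hirn2023power}, I expect the argument to transfer almost verbatim from the one-dimensional (power spectrum) setting to the two-dimensional (bispectrum in polar coordinates) setting, with the only adjustment being the exponent in the dilation operator $L_C$ (now $C^4$ rather than whatever appears there) and the corresponding operator-norm calculation.
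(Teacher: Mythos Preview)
Your approach is correct and matches the paper's intended argument (the paper omits the proof, calling it a ``straight-forward generalization of Lemma 1 in \cite{hirn2023power}''). However, there is one point you should tighten: you assert that the implicit constant depends on $\eta$, writing $\lesssim_\eta$ at one stage, whereas the paper's convention is that $\lesssim$ denotes an \emph{absolute} constant. This matters, because as $\eta\to 0$ one has $C_0\to 1$ and hence $\|(I-L_{C_0})^{-1}\|\leq (1-C_0^3)^{-1}\to\infty$, so your individual bounds do not give a uniform constant on their own. The resolution is that $C_1=2\sqrt{3}\eta\to 0$ compensates: writing $a=\sqrt{3}\eta$, one computes
\[
\frac{C_1\,C_2^3}{1-C_0^3}
=\frac{2a/(1+a)^3}{\bigl[(1+a)^3-(1-a)^3\bigr]/(1+a)^3}
=\frac{2a}{2a(3+a^2)}
=\frac{1}{3+a^2}\le \frac{1}{3},
\]
so the composite operator $(I-L_{C_0})^{-1}C_1L_{C_2}$ has $L^2$ operator norm bounded by $1/3$ uniformly in $\eta\in(0,1/\sqrt{12}\,]$. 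With this verification your argument yields the lemma with an absolute constant, as required for its use in Theorems \ref{thm:dilMRA} and \ref{thm:main}.
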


This lemma implies that bounding the mean squared error (MSE) of $\widetilde{Bf}$ reduces to bounding the MSE of $\widetilde{g}_\eta$ and $r\partial_r\widetilde{g}_\eta$; since these estimators are unbiased and based off of $M$ samples, standard arguments show the MSE converges to 0 at rate $O_f(\frac{\eta^2}{M})$ as described in the next theorem.   
\begin{theorem}
\label{thm:dilMRA}
Let  $\widetilde{Bf}$ be as in \eqref{equ:est_model3} and assume that $Bf \in C^3(\mathbb{R}^2)$ and $\overline{(Bf)}^{k}\in L^2(\mathbb{R}^2)$ for $k=2,3$. Then:
\begin{align*}
&\mathbb{E}\left[\|Bf - \widetilde{Bf}\|_2^2\right] \\
&\lesssim \frac{\eta^2}{M}\|(Bf)(r, \theta)\|_2^2 + \frac{\eta^2}{M}\|r \partial_r (Bf)(r, \theta)\|_2^2 \\
&+\frac{\eta^2}{M}\|r^2\partial_{rr}(Bf)(r, \theta) \|_2^2+ O_{f}\left(\frac{\eta^4}{M}\right) \, .
\end{align*}
\end{theorem}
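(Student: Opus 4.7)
The plan is to apply Lemma \ref{lemma 1} and then obtain a pointwise variance bound via a Taylor expansion in the dilation variable $\tau$. By Lemma \ref{lemma 1} it suffices to control $\Ex\|g_\eta - \widetilde{g}_\eta\|_2^2$ and $\Ex\|r\partial_r g_\eta - r\partial_r\widetilde{g}_\eta\|_2^2$. Both $\widetilde{g}_\eta$ and $r\partial_r\widetilde{g}_\eta$ are empirical averages of $M$ i.i.d.\ samples that are unbiased for their respective targets (the derivative case uses a routine exchange of $r\partial_r$ and expectation), so each expected error equals $\tfrac{1}{M}\int\Var_\tau(\cdot)\,r\,dr\,d\theta$, reducing the problem to pointwise variance calculations for $Bf_1$ and $r\partial_r Bf_1$.

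For the first term, I would write $Bf_j(r,\theta)=h(\tau_j;r,\theta)$ with $h(\tau;r,\theta):=(1-\tau)^3 Bf((1-\tau)r,\theta)$, an identity already established in the proof of Theorem \ref{prop: infinite sample estimator}. A chain-rule calculation gives $\partial_\tau h(0)=-3\,Bf(r,\theta)-r\partial_r Bf(r,\theta)$, while $\partial_{\tau\tau}h$ and $\partial_{\tau\tau\tau}h$ are linear combinations of $Bf$, $r\partial_r Bf$, $r^2\partial_{rr}Bf$, and $r^3\partial_{rrr}Bf$ evaluated at the dilated point $((1-\tau)r,\theta)$. Since $\sqrt{3}\eta\le 1/2$ forces $(1-\tau)r\in[r/2,3r/2]\subset[r/2,2r]$, the suprema of these derivatives over $|\tau|\le\sqrt{3}\eta$ are pointwise dominated by $\overline{(Bf)}^{k}(r,\theta)$ for the appropriate $k\le 3$. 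A second-order Taylor expansion with integral remainder, combined with $\Var_\tau(Bf_1)\le\Ex_\tau[(Bf_1-Bf)^2]$ and the moment identities $\Ex[\tau]=\Ex[\tau^3]=0$, $\Ex[\tau^2]=\eta^2$, $\Ex[\tau^4]=O(\eta^4)$, produces
\[
\Var_\tau\!\bigl(Bf_1(r,\theta)\bigr)\lesssim \eta^2\bigl(\partial_\tau h(0)\bigr)^2+\eta^4\,E(r,\theta),
\]
where $(\partial_\tau h(0))^2\lesssim (Bf)^2+(r\partial_r Bf)^2$ and $E(r,\theta)$ is an $L^2$-integrable combination of squares of $Bf$, $r\partial_r Bf$, and $\overline{(Bf)}^k$ terms. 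Integrating in polar coordinates and dividing by $M$ gives the first two leading terms of the theorem plus an $O_f(\eta^4/M)$ remainder.

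The second term is handled identically via the identity $r\partial_r Bf_j(r,\theta)=(1-\tau_j)^3\psi((1-\tau_j)r,\theta)$ with $\psi(v,\theta):=v\partial_v Bf(v,\theta)$, obtained by direct differentiation. Running the same Taylor argument with $\widetilde{h}(\tau;r,\theta):=(1-\tau)^3\psi((1-\tau)r,\theta)$ gives leading term $\partial_\tau\widetilde{h}(0)=-4\,r\partial_r Bf-r^2\partial_{rr}Bf$, contributing $\tfrac{\eta^2}{M}\bigl[\|r\partial_r Bf\|_2^2+\|r^2\partial_{rr}Bf\|_2^2\bigr]$, with remainder now requiring $\overline{(Bf)}^3\in L^2$. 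Summing and invoking Lemma \ref{lemma 1} completes the proof. The main obstacle, and the step requiring the most care, is the bookkeeping for the $\eta^4$ remainder: verifying that every derivative produced by the chain rule is evaluated in the window $[r/2,2r]$ used to define $\overline{(Bf)}^k$, and that the combination of $Bf\in C^3$ with $\overline{(Bf)}^k\in L^2$ for $k=2,3$ makes the integrated remainder a finite $f$-dependent multiple of $\eta^4$.
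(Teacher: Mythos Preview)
Your proposal is correct and follows essentially the same route as the paper: reduce via Lemma~\ref{lemma 1}, use the i.i.d.\ structure to obtain a $\tfrac{1}{M}\Var_\tau(\cdot)$ factor, bound the variance by $\Ex_\tau[(Bf_j-Bf)^2]$, Taylor expand in $\tau$ with the remainder confined to $[r/2,2r]$, and repeat verbatim for the $r\partial_r$ piece. The one organizational difference is that the paper Taylor expands $Bf((1-\tau)r,\theta)$ and \emph{then} multiplies by $(1-\tau)^3$, rather than expanding the product $h(\tau)$ directly; this keeps all but the top-order Lagrange remainder evaluated at the undilated point $r$, so only $\overline{(Bf)}^2$ and $\overline{(Bf)}^3$ appear in the $\eta^4$ remainder, exactly matching the hypotheses $k=2,3$. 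Your $h''$ also carries $Bf$ and $r\partial_r Bf$ at dilated points, so your remainder tacitly needs $\overline{(Bf)}^0,\overline{(Bf)}^1\in L^2$ as well---this is precisely the bookkeeping you flagged, and it follows from the stated assumptions by a one-line mean-value estimate, e.g.\ $\overline{(Bf)}^1(r,\theta)\le r|\partial_r Bf(r,\theta)|+\overline{(Bf)}^2(r,\theta)$.
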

\begin{proof}
See Appendix \ref{app:Thm2proof}.
\end{proof}
We remark that when the dilations are large and/or the hidden signal is high frequency, the empirical average $\widetilde{g}_\eta$ is an extremely distorted approximation of $Bf$, but the unbiasing procedure defined in \eqref{equ:est_model3} produces a very accurate estimator $\widetilde{Bf}$; see Figure \ref{bispectrum recovery example}. 

\section{Bispectrum Recovery for Model 2}
\label{sec:NoisyDilationMRA}
Having successfully recovered the bispectrum under Model \ref{model3}, we now extend the approach to the noisy setting, i.e. Model \ref{model2}. The additive noise creates additional difficulties. 
First, if we compute the MSE on $\mathbb{R}^2$ as is done in Theorem \ref{thm:dilMRA}, it will diverge.
Second, we no longer have access to $\tilde{g}_\eta$, but only to
$$\frac{1}{M}\sum_{j=1}^M By_j.$$
Finally, the empirical average of the bispectra is no longer smooth due to the additive noise, and thus cannot be differentiated as done in \eqref{equ:est_model3}.

To deal with the first challenge, we compute the MSE on a finite domain $\Omega$. To deal with the second challenge, we first perform a $\sigma$-based centering to unbias the bispectra for the additive noise. More specifically, we define
\begin{align}
    \label{equ:add_noise_unbias}
    R_\sigma(\omega_1,\omega_2) &:= \tilde{\mu}(\omega_1)h(\omega_1) + \tilde{\mu}^{*}(\omega_2)h(\omega_2) \\
    &\qquad + \tilde{\mu}(\omega_2-\omega_1)h(\omega_2-\omega_1) \, ,\notag
\end{align}
where $h(\omega) = 2\sigma^2 \frac{\sin\left(\frac{1}{2}\omega\right)}{\omega}$ for $\omega \ne 0$, $h(0) = \sigma^2$, and 
$\tilde{\mu}(\omega) = \frac{1}{M} \sum_{j=1}^M\hat{y}_j(\omega)$ approximates $\mu(\omega) = \mathbb{E}_\tau[\hat{f}_j(\omega)]$. We establish this is the correct additive noise unbiasing term in the next lemma.
\begin{lemma} Assume Model \ref{model2} and $R_\sigma$ as in \eqref{equ:add_noise_unbias}. Then:
\begin{align*}
    \Ex_{\tau,\epsilon}[By_j - R_\sigma] &= \Ex_{\tau}[Bf_j] = g_\eta \, .
\end{align*}  
\end{lemma}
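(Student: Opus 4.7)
The strategy is to compute $\mathbb{E}_{\tau,\epsilon}[By_j]$ by expanding $\hat y_j = \hat f_j + \hat{\varepsilon}_j$ inside the triple product defining the bispectrum and isolating the bias introduced by the noise. The expansion yields eight terms: one pure-$\hat f_j$ term (which returns $Bf_j$, and hence $g_\eta$ after the $\tau$-expectation), three terms with exactly one factor of $\hat{\varepsilon}_j$, three with exactly two, and one with three. The single-noise terms vanish because $\varepsilon_j$ is independent of $\tau_j$ and has mean zero, so $\mathbb{E}[\hat{\varepsilon}_j(\omega)] = 0$. The triple-noise term $\mathbb{E}[\hat{\varepsilon}_j(\omega_1)\hat{\varepsilon}_j^*(\omega_2)\hat{\varepsilon}_j(\omega_2-\omega_1)]$ I would argue vanishes under the implicit Gaussianity (or symmetry) of white noise, being a third-order moment of a zero-mean symmetric process. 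Only the three double-noise terms contribute to the bias that $R_\sigma$ must cancel.

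Next, I would compute the second-order Fourier moments of white noise on $[-\tfrac{1}{2},\tfrac{1}{2}]$. From $\mathbb{E}[\varepsilon_j(s)\varepsilon_j(t)] = \sigma^2 \delta(s-t)$ restricted to the support, direct integration gives
\begin{align*}
\mathbb{E}[\hat{\varepsilon}_j(\omega_a)\hat{\varepsilon}_j^*(\omega_b)] &= \sigma^2 \int_{-1/2}^{1/2} e^{-i(\omega_a-\omega_b)t}\,dt = h(\omega_a-\omega_b), \\
\mathbb{E}[\hat{\varepsilon}_j(\omega_a)\hat{\varepsilon}_j(\omega_b)] &= h(\omega_a+\omega_b),
\end{align*}
where $h$ is exactly the sinc-type function in the statement of the lemma. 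Plugging these into the three double-noise terms (pairing $\hat{\varepsilon}_j^*(\omega_2)\hat{\varepsilon}_j(\omega_2-\omega_1)$ with $\hat f_j(\omega_1)$, and cyclically), and using that $h$ is even, collapses them to
$$\hat f_j(\omega_1)\, h(\omega_1) + \hat f_j^*(\omega_2)\, h(\omega_2) + \hat f_j(\omega_2-\omega_1)\, h(\omega_2-\omega_1).$$

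Finally, because $\varepsilon$ and $\tau$ are independent, averaging over $\tau_j$ replaces each $\hat f_j(\omega)$ above by $\mu(\omega) = \mathbb{E}_\tau[\hat f_j(\omega)]$. On the other side, $\mathbb{E}_{\tau,\epsilon}[\tilde\mu(\omega)] = \mu(\omega)$ is immediate from linearity, the identical distribution of the $y_j$, and $\mathbb{E}[\hat{\varepsilon}_j] = 0$. Thus $\mathbb{E}_{\tau,\epsilon}[R_\sigma]$ matches precisely the bias I computed, and subtracting leaves $\mathbb{E}_\tau[Bf_j] = g_\eta$. The main technical hurdle is purely bookkeeping: tracking which arguments are conjugated in each of the three double-noise terms so that the signs inside $h$ align and produce the symmetric formula for $R_\sigma$, together with formally justifying the vanishing of the cubic noise moment, which is where the Gaussianity (or at least a symmetric, zero-third-moment) assumption on $\varepsilon_j$ is implicitly invoked.
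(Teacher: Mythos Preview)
Your proposal is correct and follows essentially the same route as the paper: expand the bispectrum of $y_j$, discard the odd-order noise terms (the paper handles the single- and triple-$\hat\varepsilon_j$ terms together via the Brownian-integral representation, while you treat them separately, but the content is identical), compute the three second-order noise moments to produce the $h$-terms, and then match with $\mathbb{E}_{\tau,\epsilon}[R_\sigma]$ after the $\tau$-average. The only cosmetic difference is that the paper cites the It\^o-isometry formula $\mathbb{E}_\varepsilon[\hat\varepsilon(\omega_1)\hat\varepsilon^*(\omega_2)] = h(\omega_2-\omega_1)$ directly, whereas you derive it from the $\delta$-covariance; both are equivalent and your bookkeeping of the conjugations is correct.
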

\begin{proof}
Since $\varepsilon_j$ is a white noise process on $[-1/2, 1/2]$ with variance $\sigma^2$, we can write $\hat{\varepsilon}_j(\omega) = \int_{-1/2}^{1/2} e^{-i \omega x} dB_x$ as an integral with respect to a Brownian motion, and it is clear that terms involving an odd power of $\hat{\epsilon}_j$ are mean zero. Thus for fixed $\tau_j$:
\begin{align*}
&\mathbb{E}_{\varepsilon}[By_j(\omega_1, \omega_2)] \\
&= Bf_j(\omega_1,\omega_2) + \mathbb{E}_{\varepsilon}[\hat{f}_j(\omega_1)\hat{\varepsilon}_j^{*}(\omega_2)\hat{\varepsilon}_j(\omega_2 - \omega_1) \\
&+  \hat{f}_j^{*}(\omega_2)\hat{\varepsilon}_j(\omega_1)\hat{\varepsilon}_j(\omega_2 - \omega_1) + \hat{\varepsilon}_j(\omega_1) \hat{\varepsilon}_j^{*}(\omega_2)\hat{f}_j(\omega_2 - \omega_1)] \\
&= Bf_j(\omega_1,\omega_2) + \hat{f}_j(\omega_1)h(\omega_1)  \\
&+ \hat{f}_j^{*}(\omega_2)h(\omega_2) + \hat{f}_j(\omega_2-\omega_1)h(\omega_2-\omega_1)
\end{align*}
since
$\mathbb{E}_{\varepsilon}[\hat{\varepsilon}(\omega_1) \hat{\varepsilon}^{*}(\omega_2)] = 
h(\omega_2-\omega_1)$
(see Theorem 4.5 of \cite{klebaner2012introduction}).
Taking expectation over $\tau$ now gives:
\begin{align*}
  \Ex_{\tau,\epsilon}[By_j] &= g_\eta + \mu(\omega_1)h(\omega_1) + \mu^{*}(\omega_2)h(\omega_2) \\
    &\qquad + \mu(\omega_2-\omega_1)h(\omega_2-\omega_1) \\
    &= g_\eta + \Ex_{\tau,\epsilon}[\tilde{\mu}(\omega_1)h(\omega_1) + \tilde{\mu}^{*}(\omega_2)h(\omega_2) \\
    &\qquad + \tilde{\mu}(\omega_2-\omega_1)h(\omega_2-\omega_1)] \\
    &= g_\eta + \Ex_{\tau,\epsilon}[R_\sigma] \, .
\end{align*}
\end{proof}
After empirical centering by $R_\sigma$, we can decompose the computable quantity into two pieces for easier analysis: 
$$\frac{1}{M}\sum_{j=1}^M By_j - R_\sigma= \tilde{g}_\eta + \tilde{g}_\sigma,$$
where 
\begin{align*}
\tilde{g}_\eta &:=\frac{1}{M}\sum_{j=1}^M Bf_j  \ ,\  \tilde{g}_\sigma := \frac{1}{M}\sum_{j=1}^M (By_j-Bf_j) -R_{\sigma}\, .
\end{align*}
Such a decomposition will allow us to leverage the results for Model \ref{model3} to control $\tilde{g}_\eta$.
We let $R_j := By_j-Bf_j$ denote the deviation in the definition of $\tilde{g}_\sigma$.

Note $\tilde{g}_\eta+\tilde{g}_\sigma$ is not smooth due to the additive noise. Thus to deal with the third challenge, we need to add a smoothing procedure. Let $\phi_L(r) = (2\pi L^2)^{-1/2} e^{-r^2/(2L^2)}$ be a low pass filter. We define a new estimator for Model \ref{model2} as
\begin{align} 
\label{Model 2 estimator}
&(\widetilde{Bf})(r, \theta) := (I-L_{C_0})^{-1}C_1L_{C_2} \\
&\ \ \left[4 ((\tilde{g}_\eta + \tilde{g}_\sigma) \ast \phi_L)(r , \theta) + r\partial_r(( \tilde{g}_\eta + \tilde{g}_\sigma) \ast \phi_L) (r , \theta)\right] \notag.    
\end{align}
To bound the MSE of the above estimator, we will use the following four lemmas. Lemmas \ref{lem:conv_decay} and \ref{lem:conv_decay} generalize Lemmas 2 and 3 in \cite{hirn2023power}; Lemma \ref{lem: noise bound} generalizes Lemma D.1 in \cite{hirn2021wavelet}; the proofs of all three are provided in the supplemental material for completeness. 
\begin{lemma}
\label{lem:conv_decay}
Let $q \in L^2(\mathbb{R}^2)$ and assume $|\hat{q}(\omega)|$ decays like $|\omega|^{-\alpha}$ for some integer $\alpha \geq 2$, i.e. there exist constants $C,\omega_0$ such that $|\hat{q}(\omega)|\leq C|\omega|^{-\alpha}$ for $|\omega|>\omega_0$. Then for $L$ small enough,
$$\|q - q \ast \phi_L\|_2^2 \lesssim \|q\|_2^2 L^4 + L^{4 \land (2\alpha-2)}.$$
\end{lemma}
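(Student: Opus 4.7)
The plan is to pass to the Fourier domain via Plancherel's theorem, so that
\[
\|q - q*\phi_L\|_2^2 \;=\; \frac{1}{(2\pi)^2}\int_{\mathbb{R}^2} |\hat{q}(\omega)|^2 \,|1 - \hat{\phi}_L(\omega)|^2 \, d\omega,
\]
and then exploit the fact that $\hat{\phi}_L$ is a Gaussian of effective bandwidth $1/L$, so that the elementary inequality $0 \le 1-e^{-x} \le \min(x,1)$ yields the pointwise bound $|1 - \hat{\phi}_L(\omega)|^2 \le \min\!\bigl(L^4|\omega|^4/4,\,1\bigr)$. This reduces everything to a purely radial estimate on the Fourier side.

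Next, I would split the frequency domain into three regions, determined by the scales $\omega_0$ and $1/L$ (taking $L$ small enough that $\omega_0 < 1/L$). On the low-frequency region $\{|\omega|\le \omega_0\}$, the quadratic bound gives the uniform estimate $|1-\hat{\phi}_L(\omega)|^2 \le L^4\omega_0^4/4$, contributing at most $L^4\omega_0^4\|\hat{q}\|_2^2 \lesssim L^4\|q\|_2^2$ by Plancherel. On the intermediate region $\{\omega_0 < |\omega|\le 1/L\}$, I combine the quadratic bound $|1-\hat{\phi}_L(\omega)|^2 \le L^4|\omega|^4/4$ with the decay hypothesis $|\hat{q}(\omega)|\le C|\omega|^{-\alpha}$ and convert to polar coordinates; the integrand becomes proportional to $L^4 r^{5-2\alpha}$, so the radial integral gives $O(L^2)$ when $\alpha=2$, $O(L^4\log(1/L))$ when $\alpha=3$, and $O(L^4)$ when $\alpha\ge 4$. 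On the high-frequency tail $\{|\omega|>1/L\}$, I use the trivial bound $|1-\hat{\phi}_L|\le 1$ together with the decay, reducing to $\int_{1/L}^{\infty} r^{1-2\alpha}\,dr = O(L^{2\alpha-2})$. Summing the three contributions gives $\|q\|_2^2 L^4 + L^{4\land (2\alpha-2)}$, as claimed.

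The main obstacle I anticipate is the borderline case $\alpha=3$, where the intermediate region produces a $\log(1/L)$ factor that is not present in the stated bound. This can be absorbed into the $\lesssim$ notation for $L$ small, or handled more cleanly by interpolating the Fourier multiplier bound as $|1-e^{-x}|\lesssim x^s$ for some $s<1$, which trades a slightly worse constant for the removal of the log. A secondary, mostly bookkeeping issue is that $\phi_L$ as written looks one-dimensional while $q$ is a function on $\mathbb{R}^2$; regardless of whether one interprets $\phi_L$ as an isotropic Gaussian, a tensor product, or a radial 1D convolution, the Fourier transform remains Gaussian with bandwidth $1/L$, and the three-region estimate above goes through with the same decay rates in $L$.
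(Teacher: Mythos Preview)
Your Plancherel-based three-region split is the standard argument and almost certainly matches the paper's own proof, which is deferred to the supplemental material (described there only as a direct generalization of the corresponding lemma in \cite{hirn2023power}). The low-frequency and high-frequency estimates are correct as written, and your remark about the one- versus two-dimensional interpretation of $\phi_L$ is a legitimate bookkeeping ambiguity that does not affect the rates.

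The one place where your proposal does not close is the borderline $\alpha=3$, which you correctly flag. The logarithm there is genuine: after the change of variables $u=L|\omega|$, the contribution from $|\omega|>\omega_0$ becomes $L^{2\alpha-2}\int_{L\omega_0}^\infty u^{1-2\alpha}\bigl(1-e^{-u^2/2}\bigr)^2\,du$, and for $\alpha=3$ the integrand behaves like $u^{-1}$ near the origin, so the integral is $\sim\log(1/L)$. Neither of your proposed fixes removes it. The factor $\log(1/L)$ is unbounded as $L\to0$ and therefore cannot be absorbed into the $\lesssim$; and the interpolation $|1-e^{-x}|\lesssim x^s$ with $s<1$ yields a bound of order $L^{4s}$ rather than $L^4$, so you are trading the exponent, not merely the constant. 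The most honest reading is that the stated inequality carries an implicit $\log(1/L)$ at $\alpha=3$; this is harmless for the paper's downstream use in Theorem~\ref{thm:main}, since the final rate after optimizing over $L$ is unchanged up to logarithmic factors.
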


\begin{lemma}
\label{lem:conv_decay2}
Let $rq(r, \theta) \in L^2(\mathbb{R}^2, dr \times d \theta)$ and assume its Fourier transform $\widehat{(\cdot)q(\cdot)}(\omega)$ decays like $|w|^{-\alpha}$ for some integer $\alpha \geq 2$. Then for $L$ small enough,

\begin{align*}
\|r(q - q \ast \phi_L)(r, \theta)\|_2^2 &\lesssim \|r q(r, \theta)\|_2^2 L^4 + L^{4 \land (2\alpha-2)} \\
&+ (L^3 \|q\|_2^2) \land (L^4 \|\partial_r q\|_2^2).
\end{align*}
\end{lemma}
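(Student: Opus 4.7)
The natural plan is to split
\begin{align*}
r(q - q\ast\phi_L)(r,\theta) &= \bigl[rq - (rq)\ast\phi_L\bigr](r,\theta) \\
&\quad + \bigl[(rq)\ast\phi_L - r(q\ast\phi_L)\bigr](r,\theta),
\end{align*}
and use $\|a+b\|_2^2 \leq 2\|a\|_2^2 + 2\|b\|_2^2$ to treat the two pieces separately. The first piece is in exactly the form handled by the preceding lemma applied with $rq$ in place of $q$: the hypothesis is precisely that $\widehat{(\cdot)q(\cdot)}(\omega)$ decays like $|\omega|^{-\alpha}$, so Lemma \ref{lem:conv_decay} immediately yields
$$\|rq - (rq)\ast\phi_L\|_2^2 \lesssim \|rq\|_2^2\, L^4 + L^{4\land(2\alpha-2)},$$
accounting for the first two terms of the target estimate.

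The heart of the proof is controlling the commutator. A direct computation (substituting $u = r-s$) shows
$$\bigl[(rq)\ast\phi_L - r(q\ast\phi_L)\bigr](r,\theta) = -(q\ast\psi_L)(r,\theta), \qquad \psi_L(u) := u\phi_L(u).$$
The structural observation is that $\psi_L$ is \emph{odd}, so $\int \psi_L(u)\,du = 0$, while $\int u\,\psi_L(u)\,du = \int u^2\phi_L(u)\,du \asymp L^2$. I would exploit this mean-zero property to produce the two bounds whose minimum appears in the statement.

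For the smoothness-based bound, the mean-zero property allows us to write
$$(q\ast\psi_L)(r,\theta) = \int [q(r-u,\theta) - q(r,\theta)]\,\psi_L(u)\,du,$$
and Taylor expanding in the radial slot extracts the leading contribution $-\partial_r q(r,\theta)\int u^2\phi_L(u)\,du \asymp -L^2 \partial_r q(r,\theta)$, with the quadratic remainder controlled in $L^2$ by a standard argument; squaring gives $\|q\ast\psi_L\|_2^2 \lesssim L^4\|\partial_r q\|_2^2$. For the non-smooth bound, I would apply a Young-type estimate refined by a frequency-side analysis using $\widehat{\psi_L}(\omega) = -iL^2\omega\,e^{-L^2\omega^2/2}$, which vanishes at $\omega = 0$ and is exponentially small for $|\omega| \gg 1/L$; splitting $q$ around the scale $1/L$ and estimating each piece separately should give the claimed $L^3\|q\|_2^2$ control.

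The main obstacle I anticipate is the non-smooth bound: naive Young gives only $\|q\ast\psi_L\|_2^2 \lesssim \|\psi_L\|_1^2\|q\|_2^2 \asymp L^2\|q\|_2^2$, so the improvement requires genuinely exploiting the cancellation inherent in the oddness of $\psi_L$, either via the frequency-side splitting above or by combining the mean-zero identity with a Cauchy--Schwarz estimate on the symmetric-difference representation of $q\ast\psi_L$. Once both commutator bounds are in hand, taking their minimum and adding the contribution from the first piece yields the stated estimate.
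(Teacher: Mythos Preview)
The paper defers the proof to supplemental material, so a line-by-line comparison is not possible here. Your decomposition
\[
r(q-q\ast\phi_L)=\bigl[rq-(rq)\ast\phi_L\bigr]-q\ast\psi_L,\qquad \psi_L(u)=u\,\phi_L(u),
\]
is correct and, given that the paper describes the lemma as a ``straight-forward generalization'' of the corresponding power-spectrum lemma in \cite{hirn2023power}, is almost certainly the intended route; the first bracket is handled verbatim by Lemma~\ref{lem:conv_decay}.

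Two comments on the commutator term. First, your Taylor argument for the $L^4\|\partial_r q\|_2^2$ bound is looser than you suggest: the quadratic remainder involves $\partial_{rr}q$, so ``a standard argument'' would in fact need second-derivative control. The clean route is entirely on the Fourier side: since $|\widehat{\psi_L}(\omega)|^2=L^4\omega^2e^{-L^2\omega^2}\le L^4\omega^2$,
\[
\|q\ast\psi_L\|_2^2=\int |\widehat q(\omega)|^2\,|\widehat\psi_L(\omega)|^2\,d\omega\le L^4\int\omega^2|\widehat q(\omega)|^2\,d\omega \asymp L^4\|\partial_r q\|_2^2,
\]
with no remainder to chase.

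Second, there is a genuine gap in your plan for the $L^3\|q\|_2^2$ bound. The frequency splitting you propose cannot beat Young: the multiplier $L^4\omega^2e^{-L^2\omega^2}$ has supremum of order $L^2$ (attained at $|\omega|\sim 1/L$), so Plancherel gives at best $\|q\ast\psi_L\|_2^2\lesssim L^2\|q\|_2^2$, and no low/high decomposition of $\widehat q$ can improve this without importing extra regularity on $q$. Getting an $L^3$ power genuinely requires half a derivative---for instance $\sup_\omega L^4|\omega|\,e^{-L^2\omega^2}\lesssim L^3$ yields $\|q\ast\psi_L\|_2^2\lesssim L^3\|q\|_{\dot H^{1/2}}^2$---or else the stated exponent should read $L^2$. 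The paper's only use of this lemma (in the proof of Theorem~\ref{thm:main}) invokes the $L^4\|\partial_r q\|_2^2$ branch, so the issue does not propagate downstream, but you should not claim the frequency split ``should give'' $L^3\|q\|_2^2$; as written, that step does not go through.
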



\begin{lemma} \label{lem: noise bound}
Suppose $\varepsilon$ is a mean zero Gaussian white noise supported on $[-1/2, 1/2]$ with variance $\sigma^2$. For all $p > 0$ and $\omega \in \mathbb{R}$,
$$\mathbb{E}\left[|\hat{\varepsilon}(\omega)|^p\right] \lesssim_p \sigma^p.$$
\end{lemma}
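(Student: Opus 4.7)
The plan is to exploit the fact that $\hat{\varepsilon}(\omega)$ is a complex Gaussian random variable whose real and imaginary parts have variances uniformly bounded by $\sigma^2$ in $\omega$, and then appeal to the standard $L^p$ moment formula for centered Gaussians. This is the key content: the bound is uniform in $\omega$ because $\cos^2+\sin^2\equiv 1$ and the support of the noise has unit length, so the total variance is always exactly $\sigma^2$ regardless of the frequency.

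First, I would decompose the stochastic integral, already introduced in the preceding bispectrum unbiasing lemma, into its real and imaginary parts via Euler's formula:
$$\hat{\varepsilon}(\omega) \;=\; A(\omega) \,-\, i\,B(\omega), \qquad A(\omega) := \int_{-1/2}^{1/2}\cos(\omega x)\,dB_x,\quad B(\omega) := \int_{-1/2}^{1/2}\sin(\omega x)\,dB_x,$$
where $dB_x$ is the Brownian differential normalized so that the underlying white noise has variance $\sigma^2$. Since the integrands are deterministic and bounded, $A(\omega)$ and $B(\omega)$ are well-defined Wiener integrals and hence mean-zero real Gaussian random variables. By the It\^o/Wiener isometry,
$$\mathrm{Var}(A(\omega)) = \sigma^2\!\!\int_{-1/2}^{1/2}\!\!\cos^2(\omega x)\,dx \,\le\, \sigma^2, \qquad \mathrm{Var}(B(\omega)) = \sigma^2\!\!\int_{-1/2}^{1/2}\!\!\sin^2(\omega x)\,dx \,\le\, \sigma^2,$$
and these bounds are uniform in $\omega\in\mathbb{R}$.

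Next, I would bound the $p$th power of the modulus by the elementary inequality
$$|\hat{\varepsilon}(\omega)|^p \;=\; \bigl(A(\omega)^2 + B(\omega)^2\bigr)^{p/2} \;\lesssim_p\; |A(\omega)|^p + |B(\omega)|^p,$$
which holds for all $p>0$ (splitting cases $p\ge 2$ and $0<p<2$ and using $(a+b)^{p/2}\le a^{p/2}+b^{p/2}$ or $(a^2+b^2)^{p/2}\le 2^{p/2-1}(|a|^p+|b|^p)$ as appropriate). Taking expectations and invoking the standard Gaussian absolute-moment identity $\mathbb{E}[|X|^p] = C_p\,v^{p/2}$ for a centered Gaussian $X$ of variance $v$, together with the variance bounds above, yields
$$\mathbb{E}\bigl[|\hat{\varepsilon}(\omega)|^p\bigr] \;\lesssim_p\; \mathbb{E}[|A(\omega)|^p]+\mathbb{E}[|B(\omega)|^p] \;\lesssim_p\; \sigma^p,$$
with the implicit constant depending only on $p$.

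There is no genuine obstacle; the only point requiring a line of justification is the It\^o isometry step, which produces the crucial fact that the two variances sum to $\sigma^2$ and are individually $\le\sigma^2$ for every $\omega$. After that, the $L^p$ bound is a one-line consequence of the Gaussian moment formula, and no decay in $\omega$ is lost because $\cos^2+\sin^2\equiv 1$. The argument is robust: for non-Gaussian white noise one would replace the Gaussian moment formula by the Burkholder--Davis--Gundy inequality applied to the martingale $\int_{-1/2}^{x} e^{-i\omega s}\,dB_s$, obtaining the same bound.
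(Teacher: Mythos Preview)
Your argument is correct: decomposing $\hat{\varepsilon}(\omega)$ into real and imaginary Wiener integrals, bounding each variance by $\sigma^2$ via the isometry, and then applying the Gaussian absolute-moment formula is exactly the natural route, and every step is justified as you describe. The paper defers its proof to the supplemental material (noting only that it generalizes Lemma~D.1 of \cite{hirn2021wavelet}), so a line-by-line comparison is not possible from the text given, but your approach is the standard one and almost certainly coincides with theirs.
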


\begin{lemma}
\label{lem:noise_bound}
Assume that the assumptions of Model \ref{model2} hold, $Bf \in C^3(\mathbb{R}^2)$, $(\cdot)\widehat{Bf}(\cdot)$ decays like $|\cdot|^\kappa$ for some $\kappa \geq 3$, and 
and $\overline{(Bf)}^{k}\in L^2(\mathbb{R}^2)$ for $k=2,3$. Then:
$$\|\tilde{g}_\sigma\|_{L^2(\Omega)}^2\lesssim_{\Omega} \frac{\sigma^2}{M} \|f\|_2^4 + \frac{\sigma^4}{M} \|f\|_2^2 +  \frac{\sigma^6}{M}.$$
\end{lemma}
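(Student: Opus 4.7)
The plan is to rewrite $\tilde g_\sigma = \frac{1}{M}\sum_{j=1}^M Z_j$ as an empirical average of i.i.d.\ samples, where
$Z_j := (By_j - Bf_j) - \hat y_j(\omega_1)\,h(\omega_1) - \hat y_j^*(\omega_2)\,h(\omega_2) - \hat y_j(\omega_2 - \omega_1)\,h(\omega_2 - \omega_1).$
The preceding lemma implies $\mathbb{E}[Z_1] = 0$, so independence across $j$ yields $\mathbb{E}\|\tilde g_\sigma\|_{L^2(\Omega)}^2 = \frac{1}{M}\,\mathbb{E}\|Z_1\|_{L^2(\Omega)}^2$, and the task reduces to showing
$\mathbb{E}\|Z_1\|_{L^2(\Omega)}^2 \lesssim_\Omega \sigma^2 \|f\|_2^4 + \sigma^4 \|f\|_2^2 + \sigma^6.$

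Next I would expand $By_1 - Bf_1$ by substituting $\hat y_1 = \hat f_1 + \hat\varepsilon_1$ and group the resulting pieces of $Z_1$ by their total power in $\hat\varepsilon_1$. This produces three mean-zero linear-in-noise terms of the form $\hat\varepsilon_1(\omega_1)\hat f_1^*(\omega_2)\hat f_1(\omega_2 - \omega_1)$ and cyclic analogues; three mean-zero quadratic-in-noise terms such as $\hat\varepsilon_1(\omega_1)\hat\varepsilon_1^*(\omega_2)\hat f_1(\omega_2 - \omega_1) - \hat f_1(\omega_2 - \omega_1)\,h(\omega_2 - \omega_1)$ and cyclic analogues, where each quadratic centering is contributed by the $\hat f_1 \cdot h$ piece of the $\hat y_1 \cdot h$ centering; three mean-zero pure-noise terms $-\hat\varepsilon_1(\omega_1)\,h(\omega_1)$ and cyclic analogues, coming from the $\hat\varepsilon_1 \cdot h$ piece of the centering; and the single mean-zero cubic term $\hat\varepsilon_1(\omega_1)\hat\varepsilon_1^*(\omega_2)\hat\varepsilon_1(\omega_2 - \omega_1)$. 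The cancellation of the quadratic conditional means against the centering relies on the identity $\mathbb{E}[\hat\varepsilon(\omega_a)\hat\varepsilon^*(\omega_b)] = h(\omega_b - \omega_a)$ together with $h$ being even, which is precisely what matches the three cyclic pairings $\omega_1$, $\omega_2$, $\omega_2 - \omega_1$.

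Then I would bound the $\mathbb{E}\|\cdot\|_{L^2(\Omega)}^2$ of each group using Lemma \ref{lem: noise bound}, Plancherel, Fubini, and compactness of $\Omega$. A typical linear piece gives $\mathbb{E}\int_\Omega |\hat\varepsilon_1(\omega_1)|^2\,|\hat f_1(\omega_2)|^2\,|\hat f_1(\omega_2 - \omega_1)|^2\,d\omega_1\,d\omega_2 \lesssim \sigma^2 \|\hat f_1\|_2^4 \lesssim \sigma^2 \|f\|_2^4$, using Lemma \ref{lem: noise bound} with $p=2$, Fubini on $\mathbb{R}^2$, Plancherel, and $\|f_1\|_2 \lesssim \|f\|_2$ for bounded $\tau_1$. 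A typical quadratic piece gives $\sigma^4 \|f\|_2^2$ via $\mathbb{E}|\hat\varepsilon(\omega_a)\hat\varepsilon(\omega_b)|^2 \lesssim \sigma^4$ (Cauchy--Schwarz and Lemma \ref{lem: noise bound} with $p = 4$), combined with Plancherel on the single $\hat f_1$ factor and the compactness of the remaining coordinate. The pure-noise terms use $|h(\omega)| \leq \sigma^2$ together with $\mathbb{E}|\hat\varepsilon_1(\omega)|^2 \lesssim \sigma^2$ to produce $\sigma^6$ after integration over compact $\Omega$. Finally, the cubic piece is controlled by H\"older's inequality and Lemma \ref{lem: noise bound} with $p = 6$, yielding $\mathbb{E}|\hat\varepsilon_1(\omega_1)\hat\varepsilon_1^*(\omega_2)\hat\varepsilon_1(\omega_2 - \omega_1)|^2 \lesssim \sigma^6$ and thus a contribution of $\sigma^6$ after integration.

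The main obstacle is bookkeeping rather than analysis: one has to group the expansion into exactly ten mean-zero summands so that no cross-term cancellations between them are required in the second-moment bound, and this depends on correctly matching each of the three quadratic conditional means to the corresponding portion of $R_\sigma$ via the symmetry of $h$. Once this grouping is in place, every remaining estimate reduces cleanly to moment bounds for $\hat\varepsilon$ from Lemma \ref{lem: noise bound} together with Plancherel on the compact domain $\Omega$, and no decay assumption on $\widehat{Bf}$ or smoothness of $f$ beyond what appears in the previous lemmas is needed for this particular step.
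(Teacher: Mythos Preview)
Your argument is correct, and in fact cleaner than the paper's. The key structural difference is that you recognize upfront that $\tilde g_\sigma = \frac{1}{M}\sum_j Z_j$ is an average of genuinely i.i.d.\ mean-zero terms (because $R_\sigma$ itself is an average over $j$ of terms depending only on $\hat y_j$), so the $1/M$ factor drops out immediately and the problem reduces to a second-moment bound on a single $Z_1$. The paper does not make this identification; instead it first applies the triangle inequality to split $\tilde g_\sigma$ into seven pieces (the three linear-in-noise averages, the cubic average, and the three ``$B_i$'' pieces), and then handles each as an i.i.d.\ average separately. For the $B_i$ pieces the centering is written in terms of the empirical $\tilde\mu$, so the paper has to insert the population mean $\mu$ as an intermediate and bound both $\|\frac{1}{M}\sum_j A_j - h\mu\|$ and $\|h\mu - h\tilde\mu\|$; your decomposition avoids this detour entirely. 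Your final observation that none of the extra regularity/decay hypotheses on $Bf$ are used in this step is also correct and matches the paper's proof, which likewise only uses $f\in L^2$, compactness of $\Omega$, and the moment bound of Lemma~\ref{lem: noise bound}.
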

\begin{proof}
See Appendix B.
\end{proof}

We can now bound the error of the bispectrum estimator \eqref{Model 2 estimator} for the noisy dilation MRA model.

\begin{theorem}
\label{thm:main}
Assume that the assumptions of Model \ref{model2} hold, $Bf \in C^3(\mathbb{R}^2)$, $(\cdot)\widehat{Bf}(\cdot)$ decays like $|\cdot|^\kappa$ for some $\kappa \geq 3$, and $\overline{(Bf)}^{k}\in L^2(\mathbb{R}^2)$ for $k=2,3$. For the estimator $(\widetilde{Bf})(r, \theta)$ defined in \eqref{Model 2 estimator},
$$\mathbb{E}\left[\|Bf - \widetilde{Bf}\|^2_{L^2(\Omega)}\right] \leq C_{f,\Omega} \left( \frac{\eta^2}{M} + L^4 + \frac{\sigma^2 \vee \sigma^6}{L^2M} \right),$$
where $C_{f, \Omega}$ only depends on $f$ and $\Omega$.
\end{theorem}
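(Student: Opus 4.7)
The plan is to mimic the proof of Theorem~\ref{thm:dilMRA}, accounting simultaneously for the smoothing and the additive noise. Comparing the estimator~\eqref{Model 2 estimator} with the identity in Theorem~\ref{prop: infinite sample estimator}, the error $Bf - \widetilde{Bf}$ is obtained by applying the bounded operator $(I-L_{C_0})^{-1}C_1 L_{C_2}[4(\cdot) + r\partial_r(\cdot)]$ to the single quantity $G := g_\eta - (\tilde{g}_\eta + \tilde{g}_\sigma) \ast \phi_L$. A domain-localized version of Lemma~\ref{lemma 1} (obtained by running the same Neumann series argument on an enlargement $\Omega' \supseteq \Omega$ absorbing the support of all iterated dilations) then gives
\[
\mathbb{E}\|Bf - \widetilde{Bf}\|_{L^2(\Omega)}^2 \lesssim_\Omega \mathbb{E}\|G\|_{L^2(\Omega')}^2 + \mathbb{E}\|r\partial_r G\|_{L^2(\Omega')}^2,
\]
reducing the task to bounding these two norms of $G$. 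Splitting
\[
G = \underbrace{(g_\eta - g_\eta \ast \phi_L)}_{G_1} + \underbrace{(g_\eta - \tilde{g}_\eta)\ast\phi_L}_{G_2} - \underbrace{\tilde{g}_\sigma \ast \phi_L}_{G_3}
\]
isolates the smoothing bias, the dilation variance, and the additive-noise variance, which will respectively yield the three terms $L^4$, $\eta^2/M$, and $(\sigma^2\vee\sigma^6)/(L^2 M)$.

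The deterministic bias $G_1$ is controlled by Lemmas~\ref{lem:conv_decay} and~\ref{lem:conv_decay2} applied to $q = g_\eta$; the hypothesized Fourier decay of $Bf$ (and of $r\,Bf$) transfers to $g_\eta$ because the latter is a bounded dilation-average of the former, giving $\|G_1\|_2^2 + \|r\partial_r G_1\|_2^2 = O(L^4)$. For $G_2$, Young's inequality and the unit $L^1$ norm of $\phi_L$ give $\|G_2\|_2 \leq \|g_\eta - \tilde{g}_\eta\|_2$; since $\tilde{g}_\eta$ is the empirical mean of $M$ i.i.d.\ copies of $Bf_j$, the same Taylor expansion of $(1-\tau)^3 Bf((1-\tau)r, \theta)$ about $\tau=0$ that drives Theorem~\ref{thm:dilMRA} produces $\mathbb{E}\|G_2\|^2 \lesssim_f \eta^2/M$. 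Because $g_\eta - \tilde{g}_\eta$ is smooth, I would keep the derivative on this factor rather than on $\phi_L$, and the analogous Taylor argument applied to $r\partial_r(Bf_j - g_\eta)$ yields $\mathbb{E}\|r\partial_r G_2\|^2 \lesssim_f \eta^2/M$ as well.

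The main obstacle is the noise term $G_3$, since $\tilde{g}_\sigma$ inherits the white-noise roughness of the observations and cannot be differentiated directly. The remedy is to transfer the derivative onto the smoothing kernel, $\partial_r(\tilde{g}_\sigma \ast \phi_L) = \tilde{g}_\sigma \ast \partial_r \phi_L$, and apply Young's inequality using $\|\partial_r \phi_L\|_1 \lesssim 1/L$. Since $r$ is bounded on $\Omega'$, this yields
\[
\mathbb{E}\|r\partial_r G_3\|_{L^2(\Omega')}^2 \lesssim_\Omega \frac{1}{L^2}\,\mathbb{E}\|\tilde{g}_\sigma\|_{L^2(\Omega')}^2,
\]
which Lemma~\ref{lem:noise_bound} dominates by $(\sigma^2\vee\sigma^6)/(L^2 M)$ up to $f$-dependent constants; the non-derivative term $\mathbb{E}\|G_3\|_{L^2(\Omega')}^2$ is smaller by a factor $L^2$ and is absorbed. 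Adding the three contributions gives the stated bound, with $C_{f,\Omega}$ collecting all $f$- and $\Omega$-dependent constants. The delicate points are carrying the enlarged domain $\Omega'$ consistently through the Neumann series and convolutions, and checking that the $\eta^2$ factor survives when the Taylor expansion used in Theorem~\ref{thm:dilMRA} is applied to $r\partial_r \tilde{g}_\eta$ rather than to $\tilde{g}_\eta$ itself.
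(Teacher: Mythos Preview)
Your approach is correct and follows the same overall strategy as the paper: reduce to $\|G\|^2 + \|r\partial_r G\|^2$ via Lemma~\ref{lemma 1}, then split into dilation fluctuation, smoothing bias, and additive-noise pieces, handled respectively by Theorem~\ref{thm:dilMRA}, Lemmas~\ref{lem:conv_decay}--\ref{lem:conv_decay2}, and Young's inequality combined with Lemma~\ref{lem:noise_bound}. The treatment of the noise term $G_3$ (Young with $\|\partial_r\phi_L\|_1 = L^{-1}$, then Lemma~\ref{lem:noise_bound}) is exactly the paper's.

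The one genuine difference is where you place the convolution in the first two pieces. You write
\[
G = (g_\eta - g_\eta \ast \phi_L) + (g_\eta - \tilde{g}_\eta)\ast\phi_L - \tilde{g}_\sigma \ast \phi_L,
\]
whereas the paper uses
\[
G = (g_\eta - \tilde{g}_\eta) + (\tilde{g}_\eta - \tilde{g}_\eta \ast \phi_L) - \tilde{g}_\sigma \ast \phi_L,
\]
i.e.\ it applies Lemmas~\ref{lem:conv_decay}--\ref{lem:conv_decay2} to the \emph{empirical} $\tilde{g}_\eta$ (controlling $\|\tilde{g}_\eta\|_2$, $\|r\partial_r\tilde{g}_\eta\|_2$, $\|\partial_{rr}\tilde{g}_\eta\|_2$ deterministically via $\|Bf_j\|_2 \leq 4\|Bf\|_2$) and leaves the dilation fluctuation $g_\eta-\tilde g_\eta$ unconvolved so that Theorem~\ref{thm:dilMRA} applies verbatim. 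Your ordering has the aesthetic advantage that the smoothing bias $G_1$ is deterministic, but it creates exactly the wrinkle you flagged at the end: for $r\partial_r G_2 = r\,[\partial_r(g_\eta-\tilde g_\eta)\ast\phi_L]$ the factor $r$ sits outside the convolution and does \emph{not} pull inside to give $[r\partial_r(g_\eta-\tilde g_\eta)]\ast\phi_L$. Bounding $|r|\leq R_{\Omega'}$ and using Young leaves you with $\mathbb{E}\|\partial_r(g_\eta-\tilde g_\eta)\|_2^2$ rather than the weighted quantity $\mathbb{E}\|r\partial_r(g_\eta-\tilde g_\eta)\|_2^2$ that Theorem~\ref{thm:dilMRA} actually controls; the Taylor argument can be rerun without the $r$-weight but then draws on norms such as $\|\partial_r(Bf)\|_2$ and $\|r\partial_{rr}(Bf)\|_2$ that are not literally among the stated hypotheses. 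The paper's split sidesteps this bookkeeping entirely.
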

\begin{proof}
First, since $(I-L_{C_0})^{-1}C_1L_{C_2}$ is a bounded linear operator as in Lemma \ref{lemma 1}, 
\begin{align*}
&\|Bf-\widetilde{Bf}\|_{L^2(\Omega)}^2 \\
& \lesssim \left\|g_\eta - \tilde{g}_\eta \right\|_{L^2(\Omega)}^2 +  \left\| r \partial_r g_\eta - r \partial_r \tilde g_\eta \right\|_{L^2(\Omega)}^2 \\
&+\|\tilde{g}_\eta - \tilde{g}_\eta \ast \phi_L\|_{L^2(\Omega)}^2 + \left\| r \partial_r \tilde g_\eta - r \partial_r (\tilde{g}_\eta \ast \phi_L) \right\|_{L^2(\Omega)}^2 \\
&+ \|\tilde{g}_\sigma \ast \phi_L\|_{L^2(\Omega)}^2 +\|r \partial_r(\tilde{g}_\sigma \ast \phi_L)\|_{L^2(\Omega)}^2.
\end{align*}
By Theorem \ref{thm:dilMRA},
\begin{align*}
&\mathbb{E}\left[\left\|g_\eta - \tilde{g}_\eta \right\|_{L^2(\Omega)}^2 +  \left\| r \partial_r g_\eta - r \partial_r \tilde g_\eta \right\|_{L^2(\Omega)}^2 \right] \\ &\lesssim \frac{\eta^2}{M} \|(Bf)(r, \theta)\|_2^2 + \frac{\eta^2}{M} \|r\partial_r(Bf)(r, \theta)\|_2^2 \\
&+ \frac{\eta^2}{M} \|r^2\partial_{rr}(Bf)(r, \theta)]^2\|_2^2 + O_f\left(\frac{\eta^4}{M}\right) \, .
\end{align*}
It remains to bound the other four terms. By Lemma 2, 
$$\|\tilde{g}_\eta - \tilde{g}_\eta \ast \phi_L \|_{L^2(\Omega)}^2 \lesssim \|\tilde{g}_\eta\|_2^2 L^4 + L^{4 \land (2\kappa-2)}.$$
Since
\begin{align*}
\|Bf_j\|_2^2 
&= (1-\tau_j)^4 \|Bf\|_2^2 
\leq 2^4 \|Bf\|_2^2 \, ,
\end{align*} applying triangle inequality yields 
$$\|\tilde{g}_\eta\|_2 = \left\|\frac{1}{M}\sum_{j=1}^M Bf_j\right\|_2 \leq \frac{1}{M}\sum_{j=1}^M \|Bf_j\|_2 \lesssim \|Bf\|_2 \, ,$$
and we obtain
$$\|\tilde{g}_\eta - \tilde{g}_\eta \ast \phi_L \|_{L^2(\Omega)} \lesssim  \|Bf\|_2^2 L^4 + L^{4 \land (2\kappa-2)}.$$
Also, by Lemma \ref{lem:conv_decay2}, 
\begin{align*}
&\|r \partial_r(\tilde{g}_\eta -  (\tilde{g}_\eta \ast \phi_L))\|_{L^2(\Omega)}^2 \\
\quad &\lesssim \|r \partial_r \tilde{g}_\eta\|_2^2 L^4 + L^{4 \land (2\kappa-2)} + L^4 \| \partial_{rr}\tilde{g}_\eta\|_2^2 \\
\quad &\lesssim (\| r\partial_r (Bf)(r, \theta)\|_2^2 + 1+\| \partial_{rr} (Bf)(r, \theta)\|_2^2) L^4 \, .
\end{align*}
Now consider the additive noise terms. 
By Young's Convolution Inequality and Lemma \ref{lem:noise_bound},
\begin{align*}
\|\tilde{g}_\sigma \ast \phi_L\|_{L^2(\Omega)}^2 &\leq \|\phi_L\|_{L^1(\Omega)}^2 \|\tilde{g}_\sigma\|_{L^2(\Omega)}^2\\
&\lesssim \|\tilde{g}_\sigma\|_{L^2(\Omega)}^2\\
&\lesssim_{\Omega} \frac{\sigma^2}{M} \|f\|_2^4 + \frac{\sigma^4}{M} \|f\|_2^2 +  \frac{\sigma^6}{M}\\
&\lesssim_{\Omega, f} \left(\frac{\sigma^2 \lor \sigma^6}{M}\right).
\end{align*}

Now we consider $\|r\partial_r( \tilde g_\sigma \ast \phi_L)\|_{L^2(\Omega)}$. Let $R_\Omega = \max_{\Omega} |r|$.  We have 
\begin{align*}
\|r\partial_r(\tilde{g}_\sigma \ast \phi_L)\|_{L^2(\Omega)}^2 &\leq R_\Omega^2 \| \tilde{g}_\sigma \ast  \partial_r \phi_L\|_{L^2(\Omega)}^2 \\
&\leq R_\Omega^2 \|\partial_r\phi_L\|_1^2  \| \tilde{g}_\sigma\|_{L^2(\Omega)}^2.
\end{align*} Since $\phi$ is a radial filter and 
$$\partial_r \phi_L(r) = -(2\pi L^3)^{-1/2} re^{-r^2/(2L^2)}\, ,$$
we obtain
$$\|\partial_r\phi_L\|_1 = \frac{1}{L^3} \int_{0}^\infty  re^{-r^2/(2L^2)} \, dr = L^{-1}$$
so that $\|\partial_r\phi_L\|_1^2 = L^{-2}.$ Combining with the bound for $\|\tilde{g}_\sigma\|_{L^2(\Omega)}^2$, we obtain 
$$\|r\partial_r(\tilde{g}_\sigma \ast \phi_L)\|_{L^2(\Omega)}^2 \lesssim_{\Omega, f}  L^{-2}\left(\frac{\sigma^2 \lor \sigma^6}{M}\right).$$
Combining terms finishes the proof.
\end{proof} 

\begin{remark}
\label{rmk:conv_rate}
When $\sigma \geq 1$, the upper bound in Theorem \ref{thm:main} is minimized by choosing $L =  \frac{\sigma}{M^{1/6}}$, 
yielding a convergence rate of $O(\frac{\eta^2}{M} + \frac{\sigma^4}{M^{2/3}})$ on the squared error. 
\end{remark}

\section{Numerical Implementation of Bispectrum Recovery}
\label{sec:optimization}

Theorem \ref{thm:main} establishes that the estimator $\widetilde{Bf}$ defined in \eqref{Model 2 estimator} is an unbiased estimator of $Bf$ and quantifies the rate of convergence. Note the estimator requires applying the operator $(I-L_{C_0})^{-1}C_1L_{C_2}$ to the empirical data term
\begin{equation}
\label{equ:emp_data_term}
 \tilde{d} :=  4(\tilde{g}_\eta + \tilde{g}_\sigma) \ast \phi_L + r(\tilde{g}_\eta + \tilde{g}_\sigma) \ast \partial_r\phi_L \, ,   
\end{equation}
and one does not have a closed form formula for applying this operator. However, $\widetilde{Bf}$ is the unique solution of a quadratic programming problem, namely:
\begin{align}
\label{equ:convex_opt_Bf}
 \widetilde{Bf} &= \text{argmin}_{\mathring{g}} \mathfrak{ L}(\mathring{g}) \ , \\ \mathfrak{L}(\mathring{g}) &=  \left\|(I-L_{C_0})\mathring{g} - C_1L_{C_2}\tilde{d} \right\|_2^2 \, . \notag
\end{align} 
It is thus straight-forward to compute $\widetilde{Bf}$ by minimizing the convex function $\mathfrak{L}(\mathring{g})$. 
\begin{remark}
A computation similar to \cite{hirn2023power} shows the gradient of the loss is
$$\nabla_{\mathring{g}}\mathfrak{L}(\mathring{g}) = 2(I-L_{C_0})^{*}((I-L_{C_0})\mathring{g} - C_1L_{C_2}\tilde{d}) \, ,$$  
where the adjoint acts as
\begin{align*}
&(I-L_{C_0})^{*}h(\omega_1,\omega_2) \\
&\quad =  h(\omega_1, \omega_2) - C_0^2h(C_0^{-1}w_1,C_0^{-1}w_2)) \, .
\end{align*}
\end{remark}

There is however a caveat to the above: solving \eqref{equ:convex_opt_Bf} requires knowledge of $\eta$, since the constants $C_i$ depend on $\eta$, and this parameter is generally unknown. We thus first learn $\eta$ using the algorithm proposed in \cite{hirn2023power}, which simultaneously approximates $\eta$ and the power spectrum $Pf$ by solving the joint optimization:
\begin{align}
\label{equ:joint_opt}
(\widetilde{Pf}, \tilde{\eta} ) &= \text{argmin}_{\mathring{p} \geq 0 , \mathring{\eta} \geq 0} \mathcal{L}(\mathring{p},\mathring{\eta}) \, ,
\end{align}
where
\begin{align} 
\label{PSLossFunctionUnknownEta}
\mathcal{L}(\mathring{p},\mathring{\eta}) &= \frac{\| \left(I-L_{C_0(\mathring{\eta})}\right)\mathring{p} - C_1(\mathring{\eta})L_{C_2(\mathring{\eta})} \tilde{q} \|_2^2}{\mathring{\eta}^2} \, . 
\end{align}
Here $\tilde{q}$ is an empirical data term analogous to \eqref{equ:emp_data_term} but computed from the noisy \textit{power spectra}, and the definitions of $C_i(\eta)$ and $L_C$ have been slightly modified to achieve power spectrum recovery (see \cite{hirn2023power} for exact definitions). We thus conveniently learn the power spectrum in addition to $\eta$, as the algorithm we apply in Section \ref{sec:numerics_HSrecovery} to recover the hidden signal requires knowledge of both the bispectrum and power spectrum. Unfortunately, unlike \eqref{equ:convex_opt_Bf}, the optimization \eqref{equ:joint_opt} is nonconvex.

\begin{remark}
  The objective function in \cite{hirn2023power} in fact contains only the numerator of \eqref{PSLossFunctionUnknownEta}, but we found the modified loss \eqref{PSLossFunctionUnknownEta} yielded more stable recovery of $\eta$, since the numerator vanishes whenever $\mathring{\eta}=0$, 
  regardless of $\mathring{p}$.  
\end{remark} 

\begin{remark}
    Although the theoretical guarantees of Theorem \ref{thm:main} apply to continuous signals $f$, in practice we will only observe a discretization of these continuous signals, and the optimization in \eqref{equ:convex_opt_Bf} is not carried out on a function $\mathring{g}:\mathbb{R}^2 \rightarrow \mathbb{C}$, but on a matrix $\mathring{g}_{i,j}=\mathring{g}(\omega_i,\omega_j)$ of discrete frequencies. Thus in addition to the sampling error reported in Theorem \ref{thm:main}, we will in practice also be subjected to discretization errors arising from approximating the continuum Fourier transform with the DFT, approximating derivatives using difference quotients, and approximating dilation operators using interpolation. See \cite{hirn2023power} for a more detailed discussion of these discretization effects.
\end{remark}


\section{Numerical Experiments for Bispectrum Recovery}
\label{sec:numerics_BSrecovery}

\begin{figure}
	\centering
	\begin{subfigure}[b]{0.24\textwidth}
		\centering
		\includegraphics[width=\textwidth]{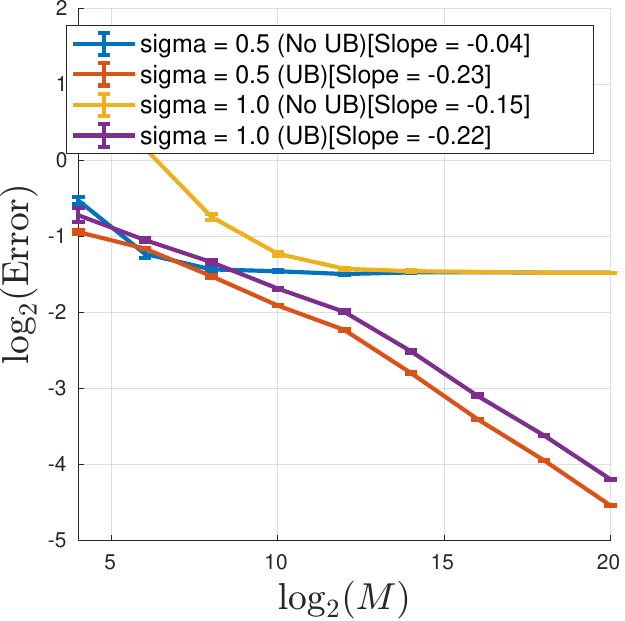}
		\caption{\footnotesize{$f_1$ (slopes $=-0.23, -0.22$)}}
		\vspace*{.3cm}
	\end{subfigure}
	\hfill
	\begin{subfigure}[b]{0.24\textwidth}
		\centering
		\includegraphics[width=\textwidth]{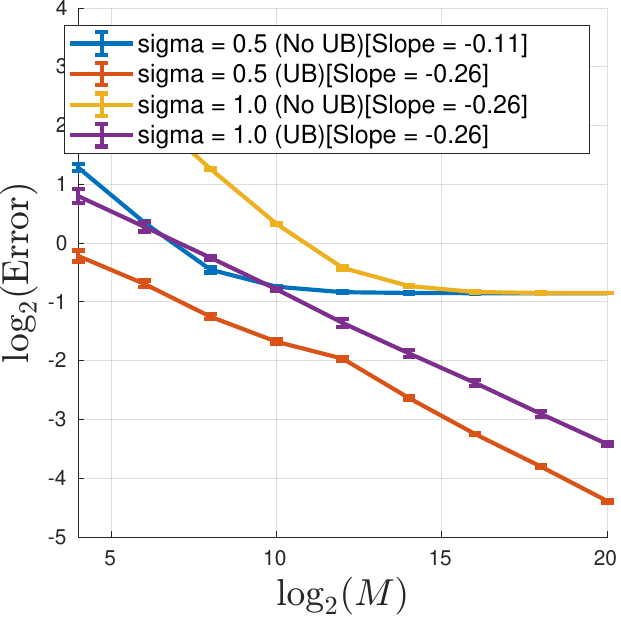}
		\caption{\footnotesize{$f_2$ (slopes $=-0.26, -0.26$)}}
		\vspace*{.3cm}
	\end{subfigure}
	\hfill
	\begin{subfigure}[b]{0.24\textwidth}
		\centering
		\includegraphics[width=\textwidth]{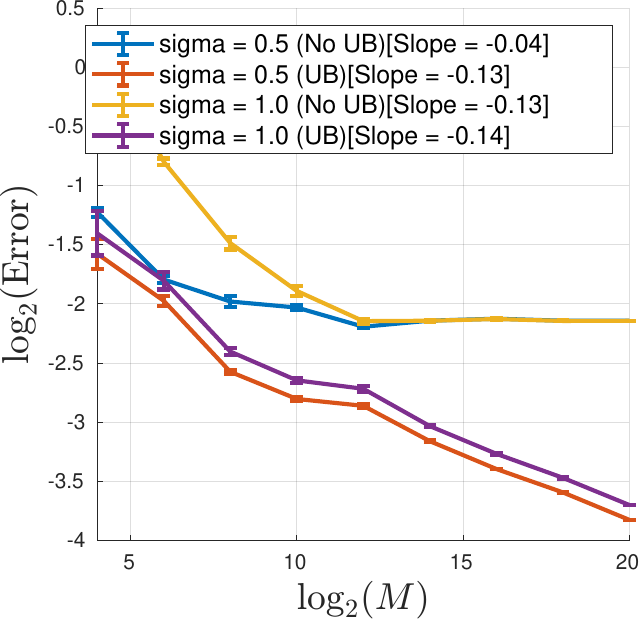}
		\caption{\footnotesize{$f_3$ (slopes $=-0.13, -0.14$)}}
	\end{subfigure}
	\begin{subfigure}[b]{0.24\textwidth}
		\centering
		\includegraphics[width=\textwidth]{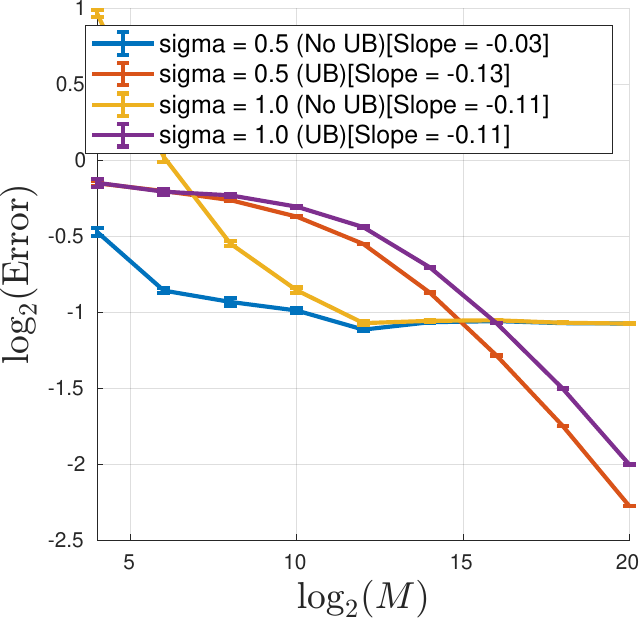}
		\caption{\footnotesize{$f_4$ (slopes $=-0.13, -0.11$)}}
	\end{subfigure}
	\caption{Bispectrum recovery error with oracle $\sigma,\eta$.}
	\label{bispectrum recovery oracle}
\end{figure}

\begin{figure}
	\centering
	\begin{subfigure}[b]{0.24\textwidth}
		\centering
		\includegraphics[width=\textwidth]{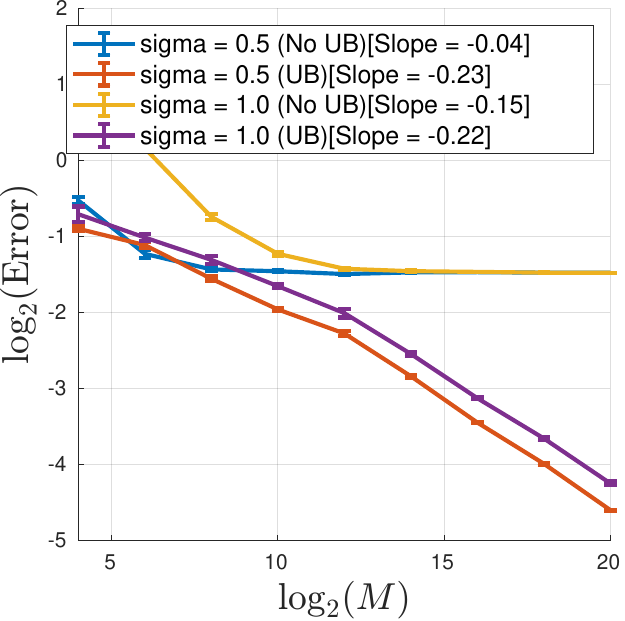}
		\caption{\footnotesize{$f_1$ (slopes $=-0.23, -0.22$)}}
		\vspace*{.3cm}
	\end{subfigure}
	\hfill
	\begin{subfigure}[b]{0.24\textwidth}
		\centering
		\includegraphics[width=\textwidth]{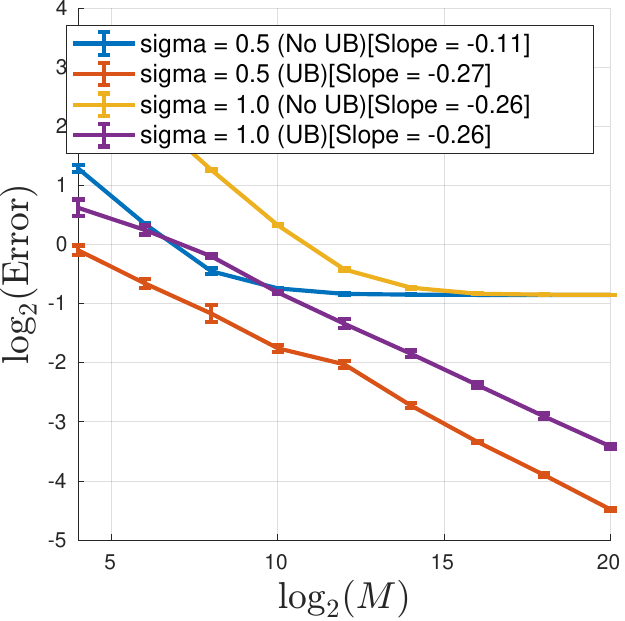}
		\caption{\footnotesize{$f_2$ (slopes $=-0.27, -0.26$)}}
		\vspace*{.3cm}
	\end{subfigure}
	\hfill
	\begin{subfigure}[b]{0.24\textwidth}
		\centering
		\includegraphics[width=\textwidth]{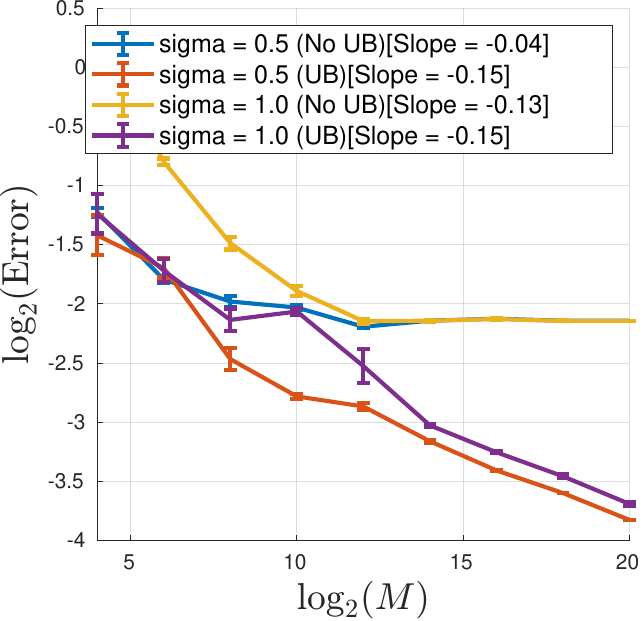}
		\caption{\footnotesize{$f_3$ (slopes $=-0.15, -0.15$)}}
	\end{subfigure}
	\begin{subfigure}[b]{0.24\textwidth}
		\centering
		\includegraphics[width=\textwidth]{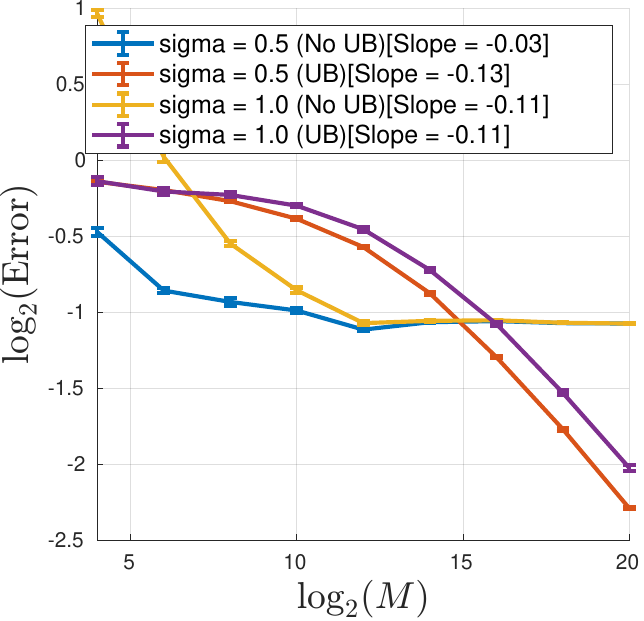}
		\caption{\footnotesize{$f_4$ (slopes $=-0.13, -0.11$)}}
	\end{subfigure}
	\caption{Bispectrum recovery error with unknown $\sigma,\eta$.}
\label{bispectrum recovery empirical}
\end{figure}

\begin{figure}[tbh]
\begin{minipage}[t]{0.320\columnwidth}
  \includegraphics[width=\linewidth]
{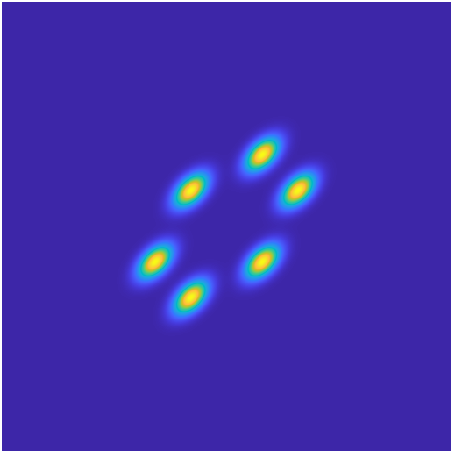}
\end{minipage}\hfill 
\begin{minipage}[t]{0.320\columnwidth}
\includegraphics[width=\linewidth]{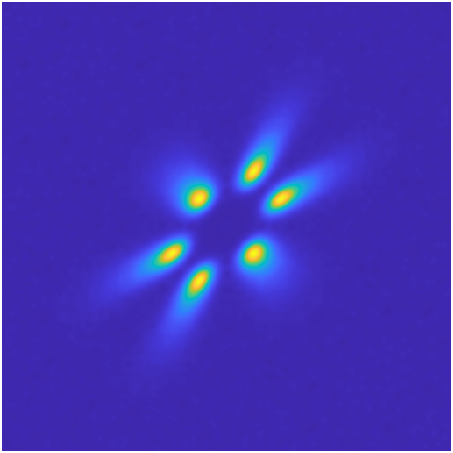}
\end{minipage}\hfill 
\begin{minipage}[t]{0.320\columnwidth}
\includegraphics[width=\linewidth]
{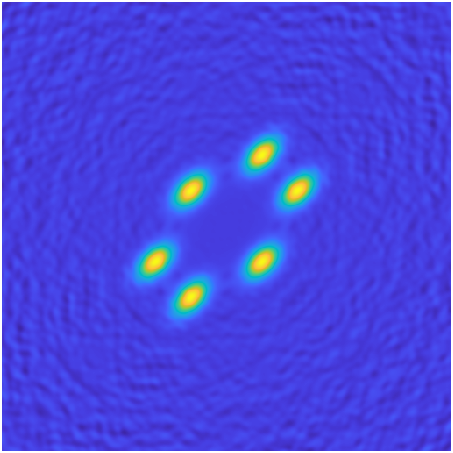}
\end{minipage}\hfill 
\caption{Example of bispectrum recovery for $f_2$ with $\eta = 12^{-1/2}$ and $\sigma = 1.0$. \textbf{Left:} ground truth signal. \textbf{Middle:} average of bispectra after additive noise unbiasing. \textbf{Right:} recovered bispectrum using unbiasing procedure.}
\label{bispectrum recovery example}
\end{figure}

\begin{figure}
	\centering
	\begin{subfigure}[b]{0.24\textwidth}
		\centering
		\includegraphics[width=\textwidth]{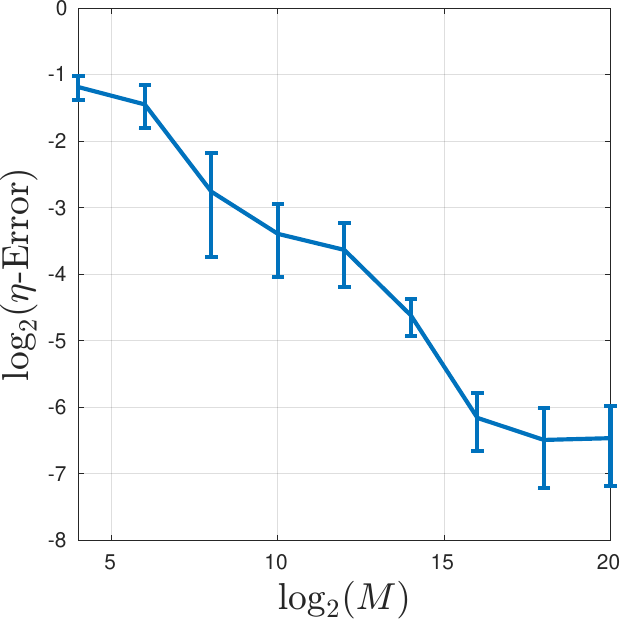}
		\caption{\footnotesize{$f_1$}}
		\vspace*{.3cm}
	\end{subfigure}
	\hfill
	\begin{subfigure}[b]{0.24\textwidth}
		\centering
		\includegraphics[width=\textwidth]{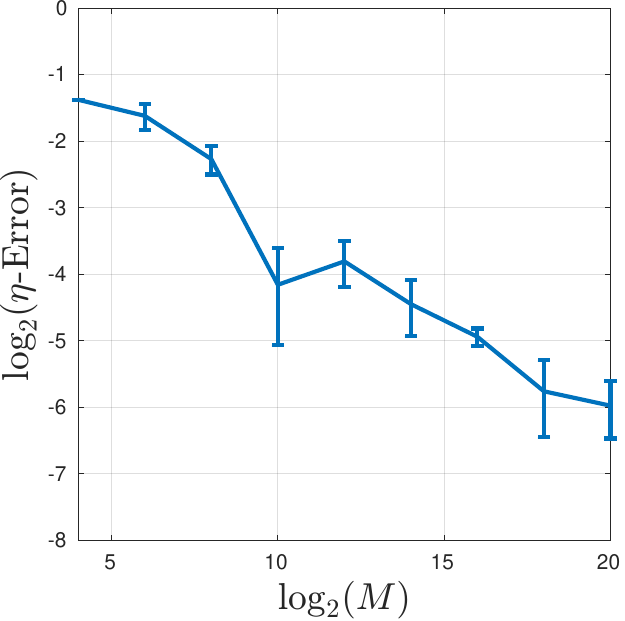}
		\caption{\footnotesize{$f_2$}}
		\vspace*{.3cm}
	\end{subfigure}
	\hfill
	\begin{subfigure}[b]{0.24\textwidth}
		\centering
		\includegraphics[width=\textwidth]{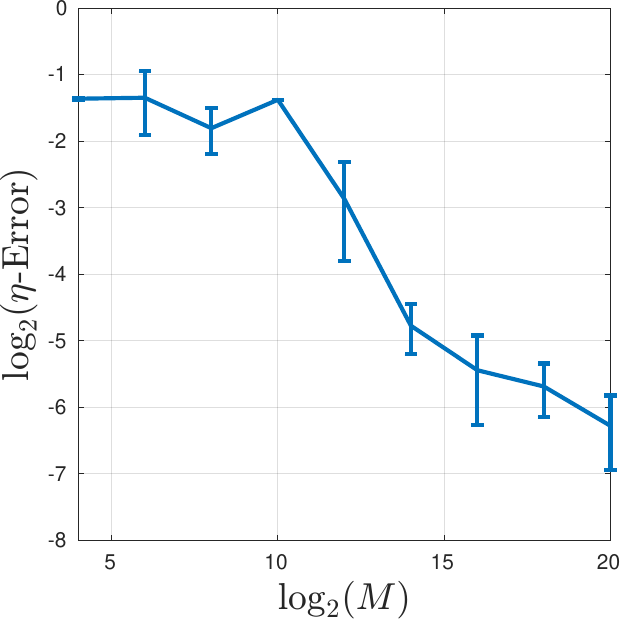}
		\caption{\footnotesize{$f_3$}}
	\end{subfigure}
	\begin{subfigure}[b]{0.24\textwidth}
		\centering
		\includegraphics[width=\textwidth]{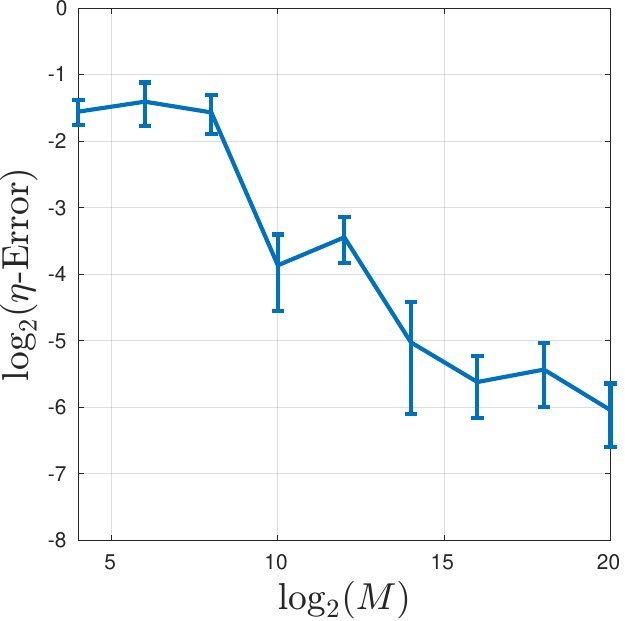}
		\caption{\footnotesize{$f_4$}}
	\end{subfigure}
	\caption{$\eta$ estimation plots with standard error bars.}
\label{fig: eta estimation}
\end{figure}

We will now test the accuracy of our bispectrum estimator $\widetilde{Bf}$ given in \eqref{Model 2 estimator} on the following signals:
\begin{align*}
f_1(x) &= A_1 e^{-5x^2} \cos(8x) \\
f_2(x) &= A_2 e^{-5x^2} \cos(12x) \\
f_3(x) &= A_3 \text{sinc}(4x) \\
f_4(x) &= A_4 \cos(6 x) \mathbf{1}_{\{x\in (-\frac{\pi}{4},\frac{\pi}{4})\}}.
\end{align*}
Like in \cite{hirn2021wavelet}, we define our hidden signals on $[-N/4, N/4]$ and the noisy signals on $[-N/2, N/2]$, and then sample them at a rate of $2^{-\ell}$, choosing $N = 2^5$ and $\ell=4$. In frequency, the signals were sampled on the interval $[-2^\ell \pi, 2^\ell \pi]$ with sampling rate of $\pi/N.$ The constants $A_i$ with $i = 1,\ldots,4$ were chosen so that the SNR for each $f_i$ was set to be $\sigma^{-2}$, where $\text{SNR} = \left(\int_{-N/2}^{N/2} |f(x)|^2 \, dx\right)/\sigma^2.$ 
Signals were randomly translated and randomly dilated with $\tau$ uniform on $[-\frac{1}{2}, \frac{1}{2}]$, so that the dilation factor $(1-\tau)\in [\frac{1}{2},2]$ and $\eta^2 = \frac{1}{12}$. Signals were then corrupted by Gaussian noise at two different noise levels: $\sigma=0.5$ and $\sigma=1$.

These signals were chosen to test the robustness of our proposed method. The signals $f_1$ and $f_2$ are smooth with exponential decay in both space and frequency, and thus the proposed method is expected to work well. 
On the other hand, $f_3$ is smooth in space, but discontinuous in frequency, so in this example the assumption of a smooth bispectrum is violated. Also, $f_4$ is continuous but not smooth in space, 
and thus poses a challenge due to its slowly decaying FT.

We start with the case where $\sigma$ and $\eta$ are given by an oracle. Figure \ref{bispectrum recovery oracle} shows the relative $L^2$ error of bispectrum recovery, i.e. $\text{Error} = \| Bf - \widetilde{Bf}\|_2/\|Bf\|_2$, as the sample size $M$ is increased. NO UB is an average of the noisy bispectra with an empirical centering to account for the bias from additive noise, and UB is with our unbiasing procedure. Figure \ref{bispectrum recovery example} illustrates how the proposed unbiasing procedure is able to remove the effect of the dilations for a fixed signal ($f_2$) and sample size.
 All experiments were run 
 with a Gaussian width of $L=5\sigma{M}^{-1/6}$ (see Remark \ref{rmk:conv_rate}). Note based on Remark 1, we expect to see the error decay approximately like $O(M^{-1/3})$, i.e. we expect to observe a slope of $-1/3$ on a log-log plot. For $f_1$ and $f_2$, the error decay is close to the theoretical bound (observed slope around -1/4). For $f_3$ and $f_4$ where some Model assumptions are violated, the decay is slower but still consistently linear, which yields empirical evidence that $\widetilde{Bf}$ is still an unbiased estimator of $Bf$. 

We next investigate bispectrum recovery when $\sigma$ and $\eta$ are unknown. Note when $\hat{f}$ has a fast decay, $\sigma$ can be accurately estimated from the decay of the power spectrum, i.e. by
\begin{align*}
	\widetilde{\sigma}^2 &:= \frac{1}{ \lvert \{ \omega_i\in\Sigma \} \rvert} \sum_{\omega_i \in \Sigma} \lvert \widehat{y}(\omega_i)\rvert^2 \, ,
\end{align*} 
where $\Sigma = [-2^\ell\pi, 2^\ell \pi] \setminus [-2^{\ell-1}\pi, 2^{\ell-1} \pi]$.
Estimating $\eta$ is more difficult but can be accomplished via a joint optimization as described in Section \ref{sec:optimization}. Figure \ref{fig: eta estimation} shows the error for $\eta$ recovery, and we see that $\eta$ can be learned reliably for large $M$. 
Figure \ref{bispectrum recovery empirical} shows the bispectrum recovery error when $\eta,\sigma$ are empirically estimated, and the results are nearly identical to the oracle case given in Figure \ref{bispectrum recovery oracle}, indicating that accurate bispectrum estimation is still possible.

\section{Hidden Signal Recovery}
\label{sec:numerics_HSrecovery}

\begin{figure}
	\centering
	\begin{subfigure}[b]{0.24\textwidth}
		\centering
		\includegraphics[width=\textwidth]{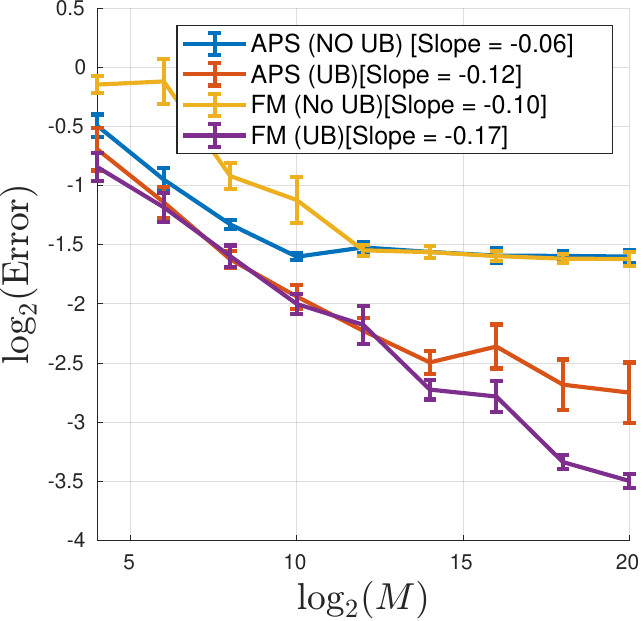}
		\caption{\footnotesize{$f_1$ (slopes$=-0.12,-0.17$)}}
		\vspace*{.3cm}
	\end{subfigure}
	\hfill
	\begin{subfigure}[b]{0.24\textwidth}
		\centering
		\includegraphics[width=\textwidth]{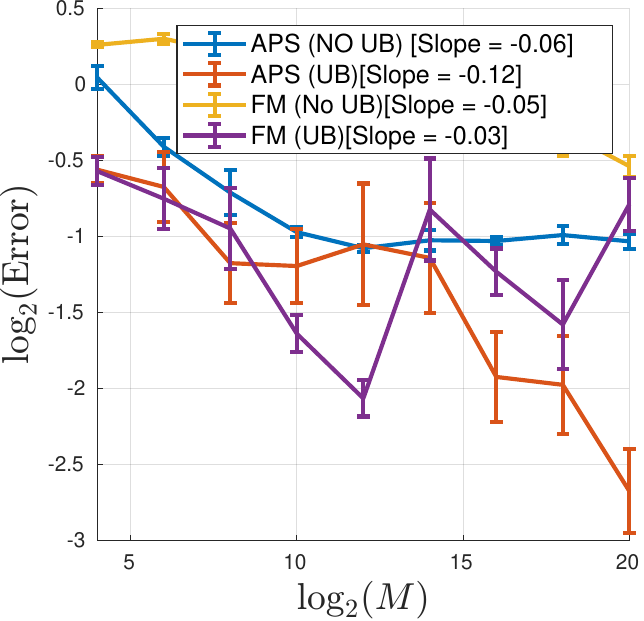}
		\caption{\footnotesize{$f_2$ (slopes$=-0.12,-0.03$)}}
		\vspace*{.3cm}
	\end{subfigure}
	\hfill
	\begin{subfigure}[b]{0.24\textwidth}
		\centering
		\includegraphics[width=\textwidth]{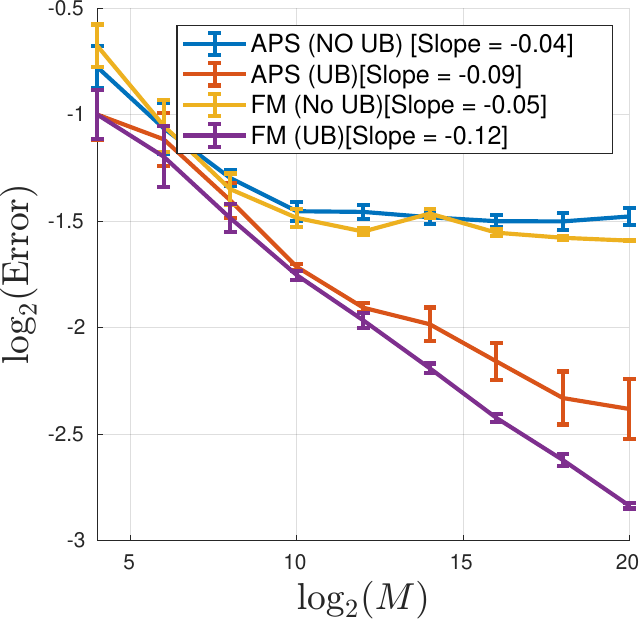}
		\caption{\footnotesize{$f_3$ (slopes$=-0.09,-0.12$)}}
	\end{subfigure}
	\begin{subfigure}[b]{0.24\textwidth}
		\centering
		\includegraphics[width=\textwidth]{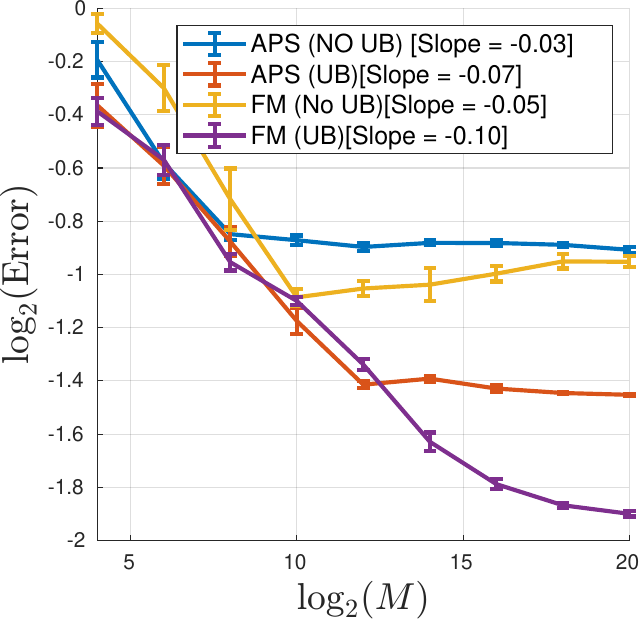}
		\caption{\footnotesize{$f_4$ (slopes$=-0.07,-0.10$)}}
	\end{subfigure}
	\caption{Hidden signal recovery error plots with oracle $\eta = 12^{-1/2}$ and $\sigma = 0.5$. First reported slope is for UB with APS, and second is UB with FM.}
\label{bispectrum inversion oracle}
\end{figure}

\begin{figure}
	\centering
	\begin{subfigure}[b]{0.24\textwidth}
		\centering
		\includegraphics[width=\textwidth]{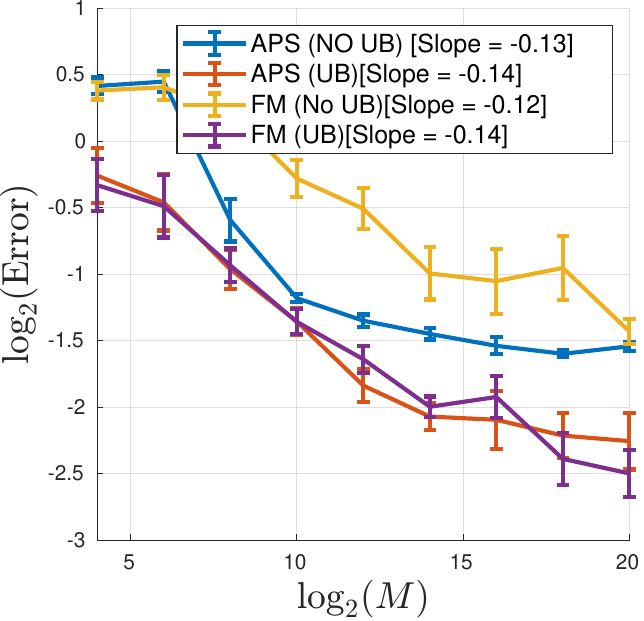}
		\caption{\footnotesize{$f_1$ (slopes$=-0.14,-0.14$)}}
		\vspace*{.3cm}
	\end{subfigure}
	\hfill
	\begin{subfigure}[b]{0.24\textwidth}
		\centering
		\includegraphics[width=\textwidth]{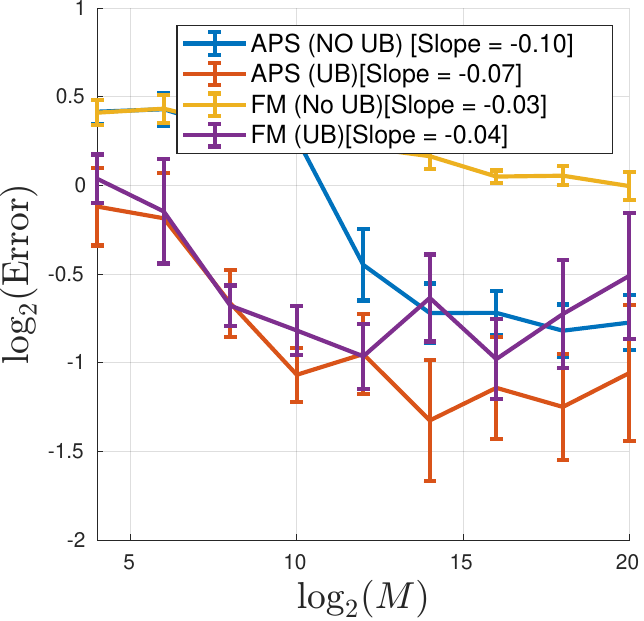}
		\caption{\footnotesize{$f_2$ (slopes$=-0.07,-0.04$)}}
		\vspace*{.3cm}
	\end{subfigure}
	\hfill
	\begin{subfigure}[b]{0.24\textwidth}
		\centering
		\includegraphics[width=\textwidth]{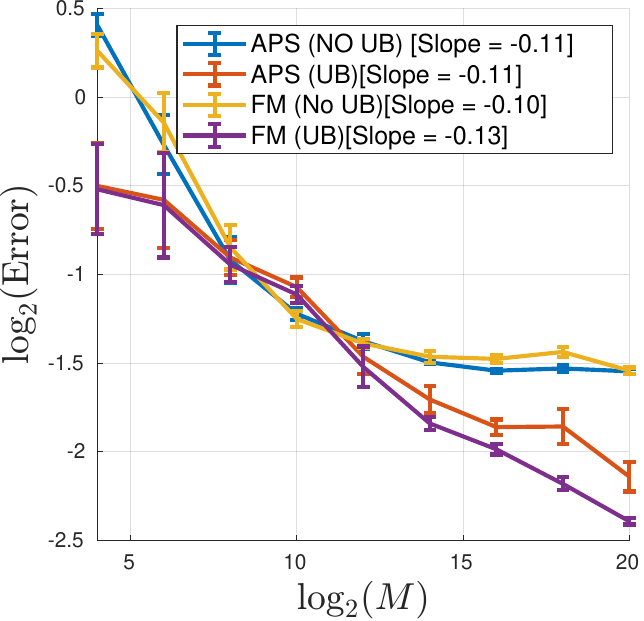}
		\caption{\footnotesize{$f_3$ (slopes$=-0.11,-0.13$)}}
	\end{subfigure}
	\begin{subfigure}[b]{0.24\textwidth}
		\centering
		\includegraphics[width=\textwidth]{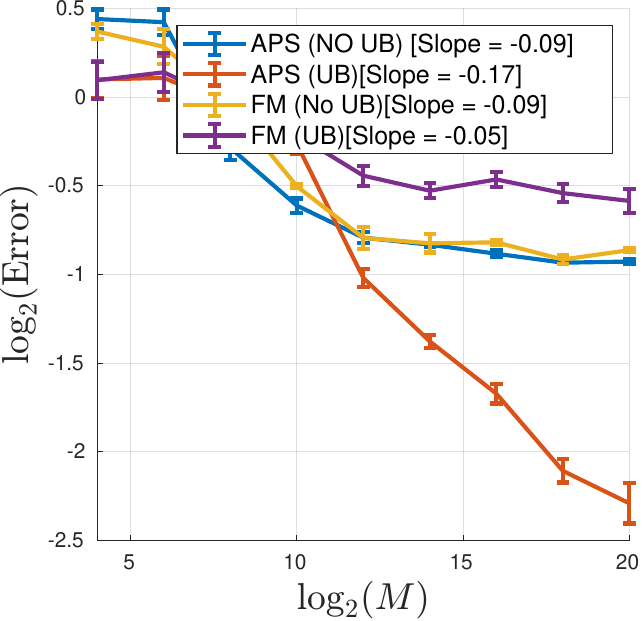}
		\caption{\footnotesize{$f_4$ (slopes$=-0.17,-0.05$)}}
	\end{subfigure}
	\caption{Hidden signal recovery error plot with unknown $\eta = 12^{-1/2}$ and $\sigma = 1.0$. First reported slope is for UB with APS, and second is UB with FM.}
\label{bispectrum inversion empirical}
\end{figure}

Once the bispectrum of the hidden signal is recovered, a bispectrum inversion algorithm must be applied to recover the hidden signal. Jennrich's algorithm provides a stability guarantee for this inversion \cite{perry2017sample, harshman1970foundations, leurgans1993decomposition}, but is not suitable for practical applications. Other approaches include non-convex optimization over the manifold of phases \cite{bendory2017bispectrum}, iterative phase synchronization \cite{bendory2017bispectrum}, semi-definite programming, phase unwrapping, frequency marching \cite{giannakis1989signal, sadler1992shift}, and the spectral method proposed in \cite{chen2018spectral}. 
We refer the reader to \cite{bendory2017bispectrum} for a survey and comparison of algorithms. Although we do not rigorously analyze inversion in this article, we nonetheless investigate empirically whether the bispectra recovered in Section \ref{sec:numerics_BSrecovery} can be accurately inverted to recover the hidden signal, i.e. whether the proposed method can fully solve the MRA problem. 
We invert our recovered bispectra using iterative phase synchronization (APS) and frequency marching (FM) (see \cite{bendory2017bispectrum}); note the inversion also requires knowledge of the power spectrum, which we learn using \cite{hirn2023power} as described in Section \ref{sec:optimization}. The resulting $L^2$ relative error of hidden signal recovery for the oracle and empirical cases are given in Figures \ref{bispectrum inversion oracle} and \ref{bispectrum inversion empirical} respectively. Note that the estimated signal $\tilde{f}$ is only defined up to translation, and the error is thus computed by $\min_t \| f(x) - \tilde{f}(x-t)\|_2 / \|f\|_2$. For brevity we only report the easiest case ($\sigma=0.5$, oracle) and the hardest case ($\sigma=1$, empirical). 
Although FM outperformed APS on the low frequency signals in the low noise regime, we found APS to be more reliable across a range of signals and noise levels (for example, FM performs poorly on the medium frequency Gabor $f_2$; see Figure \ref{bispectrum inversion oracle}). In general we observe that the hidden signal can be accurately estimated for large values of $M$, but the inversion process unfortunately increases the error, i.e. for large $M$ the hidden signal recovery error is typically 3-4 times larger than the bispectrum recovery error, and generally more variable. 


\begin{figure}[tbh]
	\centering
	\begin{subfigure}[b]{0.24\textwidth}
		\centering
		\includegraphics[width=\textwidth]{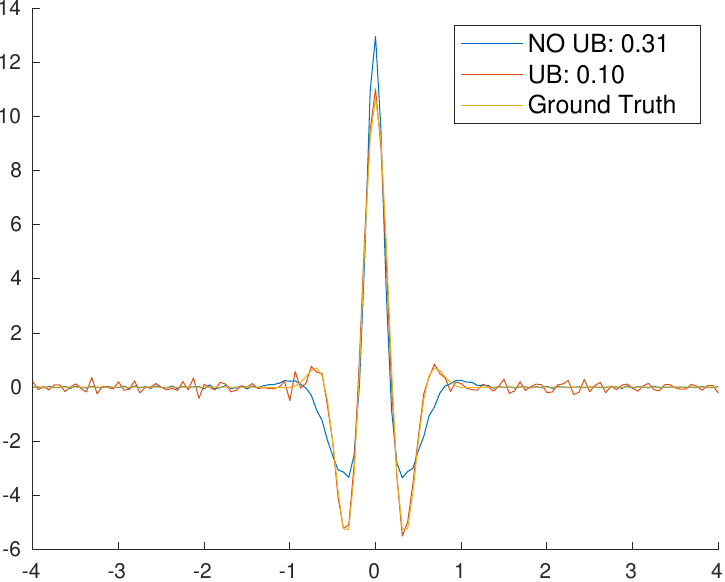}
		\caption{\footnotesize{$f_1$ (errors$=0.10,0.31$)}}
		\vspace*{.3cm}
	\end{subfigure}
	\hfill
	\begin{subfigure}[b]{0.24\textwidth}
		\centering
		\includegraphics[width=\textwidth]{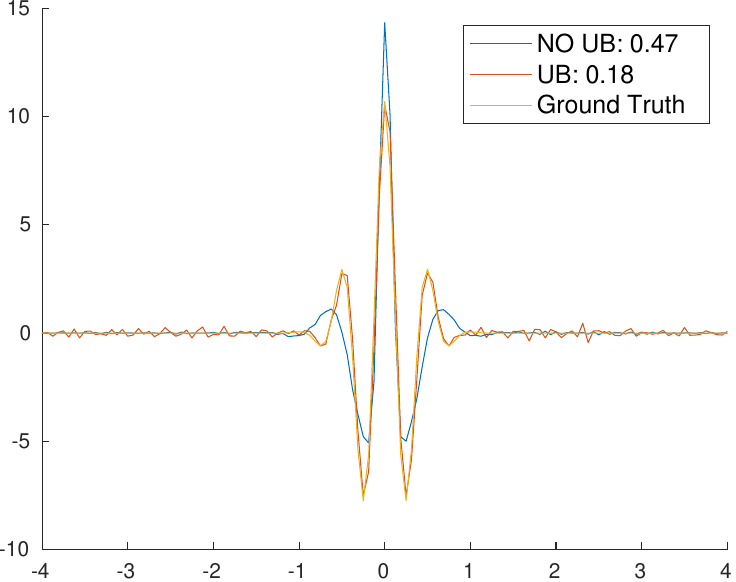}
		\caption{\footnotesize{$f_2$ (errors$=0.18,0.47$)}}
		\vspace*{.3cm}
	\end{subfigure}
	\hfill
	\begin{subfigure}[b]{0.24\textwidth}
		\centering
		\includegraphics[width=\textwidth]{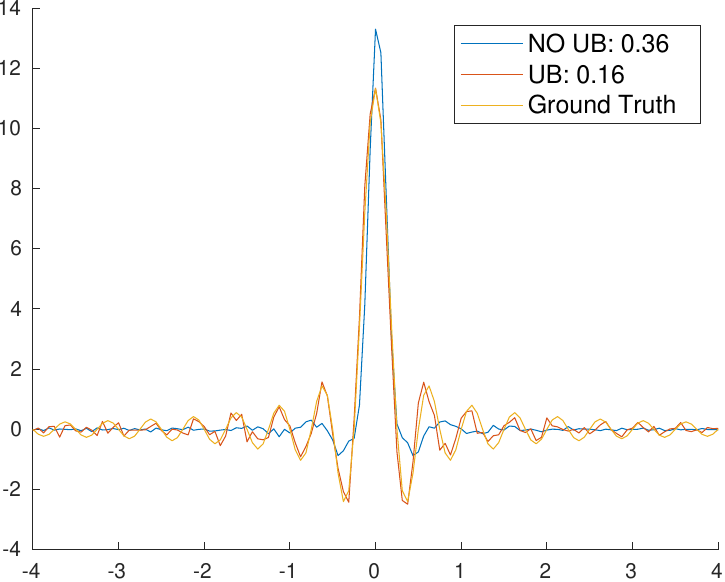}
		\caption{\footnotesize{$f_3$ (errors$=0.16,0.36$)}}
	\end{subfigure}
	\begin{subfigure}[b]{0.24\textwidth}
		\centering
		\includegraphics[width=\textwidth]{Pictures/f5_inv_plot_oracle_sigma_half.pdf}
		\caption{\footnotesize{$f_4$ (errors$=0.16,0.36$)}}
	\end{subfigure}
	\caption{Hidden signal recovery with oracle $\eta = 12^{-1/2}$ and $\sigma = 0.5$. 
 First reported error is for UB; second reported error is for NO UB.}
\label{bispectrum inversion example oracle}
\end{figure}

\begin{figure}[tbh]
	\centering
	\begin{subfigure}[b]{0.24\textwidth}
		\centering
		\includegraphics[width=\textwidth]{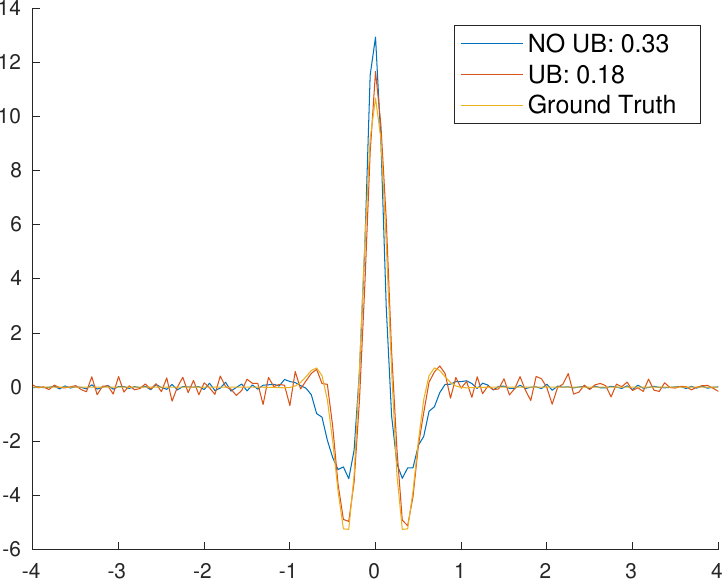}
		\caption{\footnotesize{$f_1$ (errors$=0.18,0.33$)}}
		\vspace*{.3cm}
	\end{subfigure}
	\hfill
	\begin{subfigure}[b]{0.24\textwidth}
		\centering
		\includegraphics[width=\textwidth]{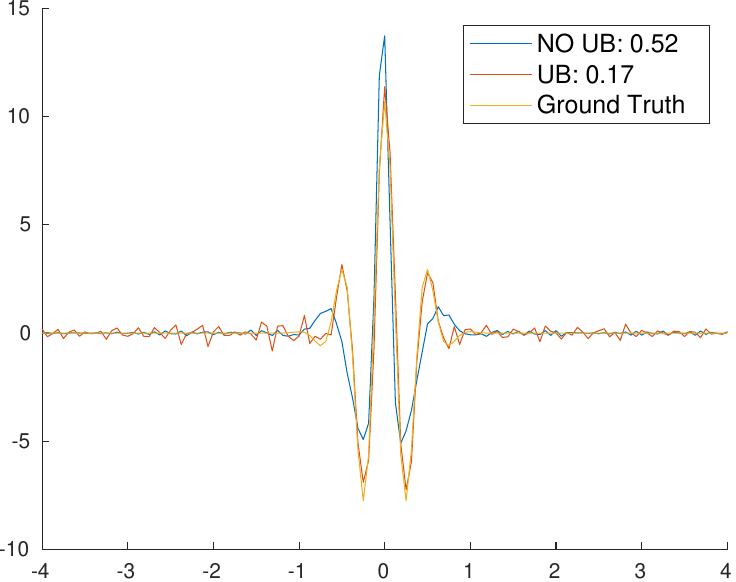}
		\caption{\footnotesize{$f_2$ (errors$=0.17,0.52$)}}
		\vspace*{.3cm}
	\end{subfigure}
	\hfill
	\begin{subfigure}[b]{0.24\textwidth}
		\centering
		\includegraphics[width=\textwidth]{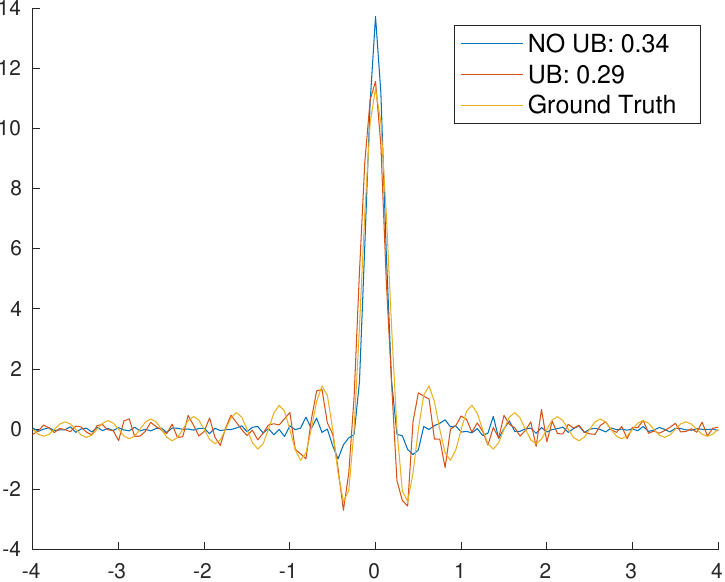}
		\caption{\footnotesize{$f_3$ (errors$=0.29,0.34$)}}
	\end{subfigure}
	\begin{subfigure}[b]{0.24\textwidth}
		\centering
		\includegraphics[width=\textwidth]{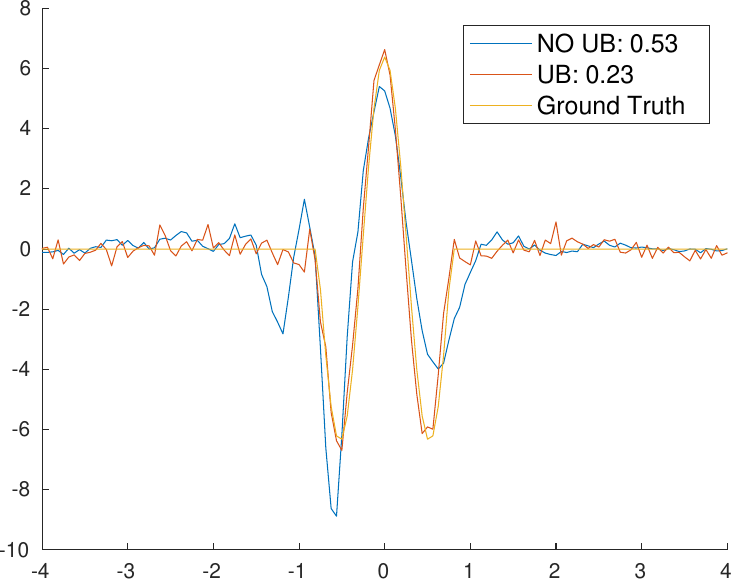}
		\caption{\footnotesize{$f_4$ (errors$=0.23,0.53$)}}
	\end{subfigure}
	\caption{Hidden signal recovery plots with unknown $\eta = 12^{-1/2}$ and $\sigma = 1.0$. 
    First reported error is for UB; second reported error is for NO UB.}
\label{bispectrum inversion example empirical}
\end{figure}

To complement our quantitative error plots, we also provide a qualitative assessment. In Figures \ref{bispectrum inversion example oracle} and \ref{bispectrum inversion example empirical}, examples of the output are provided using $M = 2^{20}$ samples (red curves), and we compare with the output if dilation unbiasing were not applied (blue curves). Our recovered signals capture well the general shape of the ground truth signals, and generally decrease the relative error of the hidden signal recovery by a factor of 2-4 (except for $f_3$ in the high noise regime). 
On the other hand, without our unbiasing procedure, the output contains bias from the dilatons, and the general shape of the recovered signals is incorrect, with the recovered peaks out of alignment. 


\section{Conclusion}
\label{sec:conclusion}
We study the MRA problem under the action of the non-compact dilation group, in addition to translation and additive noise. We present a novel technique for removing the dilation bias from the bispectrum, so that when the method is combined with bispectrum inversion it fully solves this MRA model. Since in many applications the assumption of a uniform distribution may fail, an important future direction is investigating whether the dilation distribution can be learned empirically and, if so, extending the method to an arbitrary but known dilation distribution. Also of interest is extending the method to higher dimensional signals, which is straight-forward for Model \ref{model2}, although computationally more burdensome. In higher dimensions however it is natural to incorporate rotations into the model, which poses additional challenges, as the rotation group in $\mathbb{R}^3$ is not commutative.

\bibliographystyle{plain} 
\bibliography{bibliography} 

\appendix

\subsection{Proof of Theorem 2}
\label{app:Thm2proof}
First, assume that the $Bf: \mathbb{R}^2 \to \mathbb{R}$. The argument will be generalized to the complex case after. Notice that 
$$|\tilde{g}_\eta(r, \theta) - g _\eta(r, \theta)|^2 = \left|\frac{1}{M} \sum_{j = 1}^M (Bf_j)(r, \theta) - g_\eta(r, \theta)\right|^2.$$
Define 
$$X_j  = Bf_j(r, \theta) - g_\eta(r, \theta) = Bf_j(r, \theta) - \mathbb{E}[(Bf_j)(r, \theta)].$$
This means each $X_j$ is zero centered, so we have
$$\mathbb{E}\left[\left|\frac{1}{M} \sum_{j=1}^M X_j\right|^2\right] = \text{var}\left[\frac{1}{M} \sum_{j=1}^M X_j\right] = \frac{\text{var}(X_j)}{M}.$$

Write
$$X_j = Bf_j(r, \theta) - (Bf)(r, \theta) + (Bf)(r, \theta) - \mathbb{E}[(Bf_j)(r, \theta)].$$
Then
\begin{align*}
X_j^2 &\lesssim (Bf_j(r, \theta) - (Bf)(r, \theta))^2 \\
&+ ((Bf)(r, \theta) - \mathbb{E}[(Bf_j)(r, \theta)])^2.
\end{align*}
and 
\begin{align*}
\mathbb{E}[X_j^2] &\lesssim \mathbb{E}\left[(Bf_j(r, \theta) - (Bf)(r, \theta))^2\right] \\
&+ \mathbb{E}\left[((Bf)(r, \theta) - \mathbb{E}[(Bf_j)(r, \theta)])^2\right]  \\
 &\lesssim \mathbb{E}\left[(Bf_j(r, \theta) - (Bf)(r, \theta))^2\right].
\end{align*} 

Each $\tau_j$ has bounded variance and is supported on $[-1/2,1/2]$. Taylor expand the dilated bispectrum in radial variable in  interval $[r/2, 2r]$:
\begin{align*}
(Bf)((1- \tau_j)r, \theta) &= (Bf)(r, \theta) - \partial_{r}(Bf)(r, \theta) r \tau_j \\
&+ \frac{1}{2} \partial_{rr}(Bf)(r, \theta)\bigg|_{r = \alpha} r^2 \tau_j^2,   
\end{align*}
with $\alpha \in [ r/2, 2r].$ Now multiply both sides by $(1- \tau_j)^3$ to get
\begin{align*}
(Bf_j)(r, \theta) &= (1- \tau_j)^3(Bf)((1- \tau_j)r, \theta) \\
                  &= (1- \tau_j)^3(Bf)(r, \theta) \\
                  &- (1- \tau_j)^3\partial_{r}(Bf)(r, \theta) r \tau_j \\
                  &+ (1- \tau_j)^3\frac{1}{2} \partial_{\alpha\alpha}(Bf)(\alpha, \theta) r^2 \tau_j^2.
\end{align*} with $\alpha \in [ r/2, 2r]$. It now follows that
\begin{align*}
&(Bf_j)(r, \theta) - (Bf)(r , \theta)\\
&= (3\tau_j^2 -3\tau_j - \tau_j^3)(Bf)(r, \theta) \\
&- (1- \tau_j)^3\partial_{r}(Bf)(r, \theta) r \tau_j \\
&+ \frac{1}{2} (1- \tau_j)^3\partial_{\alpha\alpha}(Bf)(\alpha, \theta) r^2 \tau_j^2.
\end{align*}
with $\alpha\in [ r/2, 2r].$

Square both sides to get:
\begin{align*}
&((Bf_j)(r, \theta) - (Bf)(r , \theta))^2 \\
&= (3\tau_j^2 - \tau_j - \tau_j^3)^2(Bf)^2(r, \theta) \\
&- 2(3\tau_j^2 - \tau_j - \tau_j^3)(1- \tau_j)^3(Bf)(r, \theta)\partial_{r}(Bf)(r, \theta) r \tau_j \\
&+(3\tau_j^2 - \tau_j - \tau_j^3)(1- \tau_j)^3(Bf)(r, \theta)\partial_{\alpha \alpha}(Bf)(\alpha, \theta) r^2 \tau_j^2 \\
&+ (1-\tau_j)^6[\partial_{r}(Bf)(r, \theta)]^2 r^2 \tau_j^2 \\
&-(1-\tau_j)^6\partial_{r}(Bf)(r, \theta)\partial_{\alpha \alpha}(Bf)(\alpha, \theta)r^3 \tau_j^3 \\
&+\frac{(1-\tau_j)^6}{4}[\partial_{\alpha \alpha}(Bf)(\alpha, \theta)]^2r^4 \tau_j^4.
\end{align*}
Using the inequality $2|ab| \leq |a|^2+|b|^2$, it follows that
\begin{align*}
&2|(\tau_j^2 - \tau_j - \tau_j^3)(1- \tau_j)^3(Bf)(r, \theta)\partial_{r}(Bf)(r, \theta) r \tau_j| \\
&\leq (\tau_j^2 - \tau_j - \tau_j^3)^2|(Bf)(r, \theta)|^2 \\
&+ (1- \tau_j)^6|\partial_r (Bf)(r, \theta) r \tau_j|^2
\end{align*}
and
\begin{align*}
&|(3\tau_j^2 - \tau_j - \tau_j^3)(1- \tau_j)^3(Bf)(r, \theta)\partial_{\alpha \alpha}(Bf)(\alpha, \theta) r^2 \tau_j^2| \\
& \leq \frac{1}{2}(3\tau_j^2 - \tau_j - \tau_j^3)^2 |(Bf)(r, \theta)|^2 \\
&+ \frac{1}{2}|1- \tau_j|^6 |\partial_{\alpha \alpha}(Bf)(\alpha, \theta) r^2 \tau_j^2|^2.
\end{align*}
and
\begin{align*}
&|(1-\tau_j)^6\partial_{r}(Bf)(r, \theta)\partial_{\alpha \alpha}(Bf)(\alpha, \theta)r^3 \tau_j^3|\\
& \leq \frac{1}{2}(1-\tau_j)^6 |\partial_{r}(Bf)(r, \theta)\tau_j r|^2 \\
&+ \frac{1}{2}|1- \tau_j|^6 |\partial_{\alpha \alpha}(Bf)(\alpha, \theta)r^2 \tau_j^2|^2.
\end{align*}

Take the expectation of both sides now. Since the pdf of $\tau$ is supported on $[-1/2, 1/2]$ and zero centered, 
$$\mathbb{E}[(3\tau_j^2 - \tau_j - \tau_j^3)^2] \lesssim \mathbb{E}[\tau_j^2] \lesssim \eta^2.$$
The other terms with $\tau_j$ are bounded in a similar way. Thus 
\begin{align*}
&\mathbb{E}[((Bf_j)(r, \theta) - (Bf)(r , \theta))^2] \\
&\lesssim \eta^2(Bf)^2(r, \theta) + r^2\eta^2[\partial_{r}(Bf)(r, \theta)]^2 \\
&+ \eta^4r^4 \left[\max_{\alpha \in[r/2,2r]}|\partial_{\alpha \alpha}(Bf)(\alpha, \theta)|\right]^2.
\end{align*}
We also have 
\begin{align*}
\text{var}[X_j] &= \mathbb{E}[X_j^2] \\
&\lesssim\mathbb{E}[((Bf_j)(r, \theta) - (Bf)(r , \theta))^2]\\
&\lesssim \eta^2(Bf)^2(r, \theta) + r^2\eta^2[\partial_{r}(Bf)(r, \theta)]^2 \\
&+ \eta^4 r^4\left[\max_{\alpha \in[r/2,2r]}|\partial_{\alpha \alpha}(Bf)(\alpha, \theta)|\right]^2,
\end{align*}
and 
\begin{align*}
&\mathbb{E}\left[(g_\eta(r, \theta) - \tilde{g}_\eta(r , \theta))^2\right] \\
&\lesssim  \frac{\eta^2}{M}(Bf)^2(r, \theta) + r^2\frac{\eta^2}{M}[\partial_{r}(Bf)(r, \theta)]^2 \\
&+ \frac{\eta^4}{M} r^4 \left[\max_{\alpha \in[r/2,2r]}|\partial_{\alpha \alpha}(Bf)(\alpha, \theta)|\right]^2.
\end{align*}
Now we can take the integral and expectation to get
\begin{align*}
&\mathbb{E}\left[\|g_\eta(r, \theta) - \tilde{g}_\eta(r , \theta)\|_2^2\right] \\
&\lesssim \frac{\eta^2}{M}\|(Bf)(r, \theta)\|_2^2 + \frac{\eta^2}{M} \|r \partial_{r}(Bf)(r, \theta)\|_2^2 \\
&+ \frac{\eta^4}{M} \left\|r^2 \max_{\alpha \in[r/2,2r]}|\partial_{\alpha \alpha}(Bf)(\alpha, \theta)|\right\|_2^2 
\end{align*}
The first term is now handled appropriately. We can now repeat a nearly identical argument for the second term. Let $g_j = Bf_j$ and 
$$Z_j = r \frac{\partial g_j}{\partial r} (r, \theta) - r\frac{\partial g_\eta}{\partial r} (r, \theta).$$
We have
$$r \frac{\partial \tilde{g}_\eta}{\partial r} (r, \theta) - r\frac{\partial g_\eta}{\partial r} (r, \theta) = \frac{1}{M}\sum_{j=1}^M r \frac{\partial g_j}{\partial r} (r, \theta) -  r\frac{\partial g_\eta}{\partial r} (r, \theta).$$
By Leibniz Rule, we can take the derivative inside the expectation to get $\mathbb{E}[Z_j] = 0$, and a similar argument from before yields
\begin{align*}
Z_j^2 &\lesssim \left[r\frac{\partial g_j}{\partial r}(r, \theta) - r\frac{\partial g}{\partial r}(r, \theta)\right]^2 \\
&+ \left[r\frac{\partial g}{\partial r}(r, \theta)- r\frac{\partial g_\eta}{\partial r}(r, \theta)\right]^2    
\end{align*}
and 
$$\mathbb{E}[Z_j^2] \lesssim \mathbb{E}\left[\left(r\frac{\partial g_j}{\partial r}(r, \theta) - r\frac{\partial g}{\partial r}(r, \theta)\right)^2\right].$$
Taylor expand $\partial_{r}(Bf)((1-\tau_j)r, \theta)$ to get 
\begin{align*}
\partial_{r}(Bf)((1- \tau_j)r, \theta) &= \partial_{r}(Bf)(r, \theta) - \partial_{rr}(Bf)(r, \theta) r \tau_j \\
&+ \frac{1}{2} \partial_{\gamma\gamma\gamma}(Bf)(\gamma, \theta) r^2 \tau_j^2,
\end{align*}
with $\gamma \in [ r/2, 2r].$ Since $r\frac{\partial g_j}{\partial r}(r, \theta) = r (1-\tau_j)^4 \partial_{r}(Bf)((1-\tau_j)r, \theta),$ multiply both sides by $(1- \tau_j)^4$:
\begin{align*}
&(1- \tau_j)^4r\partial_{r}(Bf)((1- \tau_j)r, \theta) \\
&= (1- \tau_j)^4r\partial_{r}(Bf)(r, \theta) \\
&- (1- \tau_j)^4\partial_{rr}(Bf)(r, \theta) r^2 \tau_j \\
&+ (1- \tau_j)^4\frac{1}{2} \partial_{\gamma\gamma\gamma}(Bf)(\gamma, \theta) r^3 \tau_j^2
\end{align*}
with $\gamma \in [ r/2, 2r].$ Then 
\begin{align*}
&r\frac{\partial g_j}{\partial r}(r, \theta) - r\frac{\partial g}{\partial r}(r, \theta) \\
&= (\tau_j^4 - 4\tau_j^3 + 6\tau_j^2 - 4\tau_j)r\partial_{r}(Bf)(r, \theta) \\
&- (1- \tau_j)^4\partial_{rr}(Bf)(r, \theta) r^2 \tau_j \\
&+ (1- \tau_j)^4\frac{1}{2} \partial_{\gamma\gamma\gamma}(Bf)(\gamma, \theta) r^3 \tau_j^2
\end{align*}
with $\gamma \in [ r/2, 2r].$ By a similar process from above, 
\begin{align*}
&\mathbb{E}\left[\left(r\frac{\partial g_j}{\partial r}(r, \theta) - r\frac{\partial g_\eta}{\partial r}(r, \theta)\right)^2\right]\\
&\lesssim \frac{\eta^2}{M}\|r\partial_{r}(Bf)(r, \theta)\|_2^2 + \frac{\eta^2}{M}\|r^2\partial_{rr}(Bf)(r, \theta)\|_2^2 \\
&+ \frac{\eta^4}{M} \left\|r^3 \max_{\gamma \in[r/2,2r]}|\partial_{\gamma\gamma\gamma}(Bf)(\gamma, \theta)|\right\|^2_2.
\end{align*}
Thus, we get the desired bound in the real case. For the case where $Bf: \mathbb{R}^2 \to \mathbb{C}$, simply write $Bf = \text{Re}(Bf) + i \text{Im}(Bf)$ and repeat the argument above on the real and imaginary parts and add the two norms together. 

\subsection{Proof of Lemma \ref{lem:noise_bound}}
\label{app:Lemproof}
We use triangle inequality to get
\begin{align*}
&\|\tilde{g}_\sigma\|_{L^2(\Omega)}^2 \\ 
& \lesssim \left\|\frac{1}{M}\sum_{j=1}^M\hat{f}_j(\omega_1)\hat{f}_j(\omega_2)\hat{\varepsilon}_j(\omega_2 - \omega_1)\right\|_{L^2(\Omega)}^2 \\
&+  \left\|\frac{1}{M}\sum_{j=1}^M\hat{f}_j^{*}(\omega_2)\hat{f}_j(\omega_2 - \omega_1)\hat{\varepsilon}_j(\omega_1)\right\|^2_{L^2(\Omega)}\\ 
&+  \left\|\frac{1}{M}\sum_{j=1}^M\hat{f}_j(\omega_1)\hat{f}_j(\omega_2 - \omega_1)\hat{\varepsilon}_j^{*}(\omega_2)\right\|^2_{L^2(\Omega)}\\
&+  \left\|\frac{1}{M}\sum_{j=1}^M B\varepsilon_j(\omega_1, \omega_2)\right\|_{L^2(\Omega)}^2\\
& +\left\|B_1\right\|^2_{L^2(\Omega)} + \left\|B_2\right\|^2_{L^2(\Omega)} +  \left\|B_3\right\|^2_{L^2(\Omega)}.
\end{align*} 
where
\begin{align*}
B_1 &= \frac{1}{M}\sum_{j=1}^M\hat{f}_j(\omega_1)\hat{\varepsilon}_j^{*}(\omega_2)\hat{\varepsilon}_j(\omega_2 - \omega_1) \\
&- h(\omega_1) \tilde{\mu}(\omega_1)\\
B_2 &= \frac{1}{M}\sum_{j=1}^M\hat{f}_j^{*}(\omega_2)\hat{\varepsilon}_j(\omega_1)\hat{\varepsilon}_j(\omega_2 - \omega_1) \\
&- h(\omega_2) \tilde{\mu}^{*}(\omega_2)\\
B_3 &= \frac{1}{M}\sum_{j=1}^M\hat{f}_j(\omega_2-\omega_1)\hat{\varepsilon}_j(\omega_1)\hat{\varepsilon}_j^{*}(\omega_2) \\
&- h(\omega_2-\omega_1) \tilde{\mu}(\omega_2 - \omega_1)
\end{align*}
and we proceed to bound the expectation of each term.
For the first term, we obtain:
\begin{small}
\begin{align*}
&\mathbb{E}\left[\left\|\frac{1}{M}\sum_{j=1}^M\hat{f}_j(\omega_1)\hat{f}_j^{*}(\omega_2)\hat{\varepsilon}_j(\omega_2 - \omega_1)\right\|_{L^2(\Omega)}^2\right] \\
&= \int_\Omega \mathbb{E}\left[ \frac{1}{M^2}\left|\sum_{j=1}^M\hat{f}_j(\omega_1)\hat{f}_j(\omega_2)\hat{\varepsilon}_j(\omega_2 - \omega_1)\right|^2\right] \, d \omega_1 \, d \omega_2 \\
&= \frac{\sigma^2}{M^2}\int_\Omega \sum_{j=1}^M|\hat{f}_j(\omega_1)\hat{f}_j(\omega_2)|^2  \, d \omega_1 \, d \omega_2 \\
&= \frac{\sigma^2}{M^2}\sum_{j=1}^M \int_\Omega|\hat{f}_j(\omega_1)|^2\, d \omega_1 \int_\Omega|\hat{f}_j(\omega_2)|^2 \, d \omega_2 \\
&= \frac{\sigma^2}{M} \|\hat{f}\|_{L^2(\Omega)}^4 \leq \frac{\sigma^2}{M} (2\pi)^2 \|f\|_{L^2(\mathbb{R})}^4 \, .
\end{align*} 
\end{small}
An identical argument can be applied to bound the second and third terms. For the fourth term, since $\mathbb{E}[B\epsilon_j]=0$, we have
\begin{align*}
&\mathbb{E}\left[\left\|\frac{1}{M}\sum_{j=1}^M B\varepsilon_j(\omega_1, \omega_2) \right\|_{L^2(\Omega)}^2\right] \\
&= \mathbb{E}\left[\int_{\Omega} \left|\frac{1}{M} \sum_{j=1}^M B\varepsilon_j(\omega_1,\omega_2)\right|^2 \, d \omega_1 \, d \omega_2\right] \\
&= \int_{\Omega} \mathbb{E}\left[\left|\frac{1}{M} \sum_{j=1}^M B\varepsilon_j(\omega_1,\omega_2)\right|^2\right]  \, d \omega_1 \, d \omega_2\\
&= \int_{\Omega} \text{Var}\left[\frac{1}{M} \sum_{j=1}^M B\varepsilon_j(\omega_1,\omega_2)\right]  \, d \omega_1 \, d \omega_2\\
&=  \frac{1}{M} \int_{\Omega} \text{Var}(B \varepsilon_j) \, d \omega_1\, d \omega_2.
\end{align*}
By Holder's inequality and Lemma \ref{lem: noise bound},
\begin{small}
\begin{align*}
&\text{Var}(B \varepsilon_j)
= \mathbb{E}\left[ |\hat{\varepsilon}(\omega_1)|^2| \hat{\varepsilon}(\omega_2)|^2 |\hat{\varepsilon}(\omega_2-\omega_1)|^2\right]\\
& \leq \mathbb{E}\left[|\hat{\varepsilon}(\omega_1)|^6\right]^{\frac{1}{3}} \mathbb{E}\left[| \hat{\varepsilon}(\omega_2)|^6\right]^{\frac{1}{3}} \mathbb{E}\left[|\hat{\varepsilon}(\omega_2-\omega_1)|^6\right]^{\frac{1}{3}} \lesssim \sigma^6 \, ,
\end{align*} 
\end{small}
and we thus obtain
$$\mathbb{E}\left[\left\|\frac{1}{M}\sum_{j=1}^M B\varepsilon_j(\omega_1, \omega_2) \right\|_{L^2(\Omega)}^2\right]\lesssim \frac{\sigma^6}{M} |\Omega|.$$

We now bound the last three terms, starting with 
$\mathbb{E}\left[\left\|B_1\right\|^2_{L^2(\Omega)}\right].$ 
Consider the random variable
$$A_j = \hat{f}_j(\omega_1)\hat{\varepsilon}_j^{*}(\omega_2)\hat{\varepsilon}_j(\omega_2 - \omega_1).$$
Then 
\begin{align*}
\mathbb{E}\left[\left\|B_1\right\|^2_{L^2(\Omega)}\right] 
& \lesssim \mathbb{E}\left[\left\|\frac{1}{M}\sum_{j=1}^M A_j - h(\omega_1)\mu(\omega_1) \right\|_{L^2(\Omega)}^2\right]  \\
&+ \mathbb{E}\left[\left\|h(\omega_1)\mu(\omega_1) - h(\omega_1) \tilde{\mu}(\omega_1)\right\|^2_{L^2(\Omega)}\right].
\end{align*}
Since $\Ex[A_j] = h(\omega_1)\mu(\omega_1)$,
\begin{small}
\begin{align*}
&\mathbb{E}\left[\left\|\frac{1}{M}\sum_{j=1}^M A_j -h(\omega_1)\mu(\omega_1)\right\|_{L^2(\Omega)}^2\right] 
=  \int_{\Omega} \frac{1}{M}\text{Var}\left[A_j\right] \, d \omega_1 \, d\omega_2 \, ,
\end{align*} 
\end{small}
and
\begin{align*}
    \text{Var}(A_j) & = \mathbb{E}[|\hat{f}_j(\omega_1)\hat{\varepsilon}_j^{*}(\omega_2)\hat{\varepsilon}_j(\omega_2 - \omega_1)|^2] -h^2(\omega_1)\mu^2(\omega_1) \\
    &\leq \mathbb{E}[|\hat{f}_j(\omega_1)\hat{\varepsilon}_j^{*}(\omega_2)\hat{\varepsilon}_j(\omega_2 - \omega_1)|^2]\\
    &\lesssim \sigma^4 \mathbb{E}_{t,\tau}[|\hat{f}_j(\omega_1)|^2]
\end{align*}
Now substitute this back into the integral to get 
\begin{align*}
\int_{\Omega} \frac{1}{M}\text{Var}\left[A_j\right] \, d \omega_1 \, d\omega_2 &\lesssim \frac{\sigma^4}{M}\int_{\Omega}\mathbb{E}_{t,\tau}[|\hat{f}_j(\omega_1)|^2] \, d \omega_1 \, d \omega_2\\
&=\frac{\sigma^4}{M}  \int_{\Omega}\mathbb{E}_\tau \left[|\hat{f}_j(\omega_1)|^2\right] \, d \omega_1 \, d\omega_2
\end{align*}
by translation invariance of the power spectrum. Now we have 
\begin{align*}
&\frac{\sigma^4}{M}  \int_{\Omega}\mathbb{E}_{\tau} \left[|\hat{f}_j(\omega_1)|^2\right] \, d \omega_1 \, d\omega_2 \\
&\qquad = \frac{\sigma^4}{M}\mathbb{E}_\tau\left[\int_{\Omega}     |\hat{f}_j(\omega_1)|^2\, d \omega_1 \, d\omega_2 \right]
\lesssim_{\Omega} \frac{\sigma^4}{M} \|f\|_2^2, 
\end{align*}
where the last line follows because $\|\hat{f}_j\|_2^2 = (1-\tau_j)\|\hat{f}\|_2^2 \leq \frac{3}{2} \|\hat{f}\|_2^2= 3\pi \|f\|_2^2$.

For the second term, 
\begin{align*}
&\mathbb{E}\left[\left\|h(\omega_1)\mu(\omega_1) - h(\omega_1)\tilde{\mu}(\omega_1)\right\|^2_{L^2(\Omega)}\right] \\
&= \int_{\Omega} h^2(\omega_1)\mathbb{E}\left[\left|\mu(\omega_1) - \tilde{\mu}(\omega_1)\right|^2\right] \, d\omega_1 \, d \omega_2 \\
&\lesssim\sigma^4 \int_{\Omega} \mathbb{E}\left[\left|\mu(\omega_1) - \tilde{\mu}(\omega_1)\right|^2\right] \, d\omega_1 \, d \omega_2 \\
& = \sigma^4 \int_{\Omega} \mathbb{E}\left[\left| \frac{1}{M}\sum_{j=1}^M \hat{y}_j(\omega_1) - \mu(\omega_1)\right|^2\right] \, d\omega_1 \, d \omega_2
\end{align*} 
Define $Z_j = \hat{y}_j(\omega) - \mu(\omega).$ Using similar steps to before, we get 
\begin{align*}
&\mathbb{E}\left[\left\|h(\omega_1)\mu(\omega_1) - h(\omega_1) \tilde{\mu}(\omega_1)\right\|^2_{L^2(\Omega)}\right] \\
& \lesssim \sigma^4 \int_{\Omega} \left(\frac{1}{M}\sum_{j=1}^M Z_j\right)^2 \, d\omega_1 \, d \omega_2 \\
&= \frac{\sigma^4}{M} \int_{\Omega} \text{Var}(Z_j) \, d\omega_1 \, d \omega_2 \\
& \lesssim_{\Omega} \frac{\sigma^4}{M} (\|f\|_{2}^2+\sigma^2).
\end{align*} 
Thus $\mathbb{E}\left[\left\|B_1\right\|^2_{L^2(\Omega)}\right] \lesssim_{\Omega} \frac{\sigma^4}{M} (\|f\|_{2}^2+\sigma^2)$ and an identical argument can be applied to control $B_2, B_3$.
Combining all terms thus gives
$$\|\tilde{g}_\sigma\|_{L^2(\Omega)}^2 \lesssim_{\Omega} \frac{\sigma^2}{M} \|f\|_2^4 + \frac{\sigma^4}{M} \|f\|_2^2 +  \frac{\sigma^6}{M}.$$

\end{document}